\def\calP{\mathcal{P}}
\def\Ed{\mathsf{E}\mathrm{d}}
\begin{document}

\title{The Two-Center Problem of Uncertain Points on Cactus Graphs\thanks{This research was supported in part by U.S. National Science Foundation under Grant CCF-2339371.}}

\titlerunning{The Two-Center Problem of Uncertain Points on Cactus Graphs}

\author{Haitao Xu\and Jingru Zhang}

\authorrunning{H. Xu and J. Zhang}

\institute{Cleveland State University, Cleveland, Ohio 44115, USA\\
\email{h.xu12@vikes.csuohio.edu, j.zhang40@csuohio.edu}}

\maketitle             

\begin{abstract}
We study the two-center problem on cactus graphs in facility locations, which aims to place two facilities on the graph network to serve customers in order to minimize the maximum transportation cost. In our problem, the location of each customer is uncertain and may appear at $O(m)$ points on the network with probabilities. More specifically, given are a cactus graph $G$ and a set $\calP$ of $n$ (weighted) uncertain points where every uncertain point has $O(m)$ possible locations on $G$ each associated with a probability and is of a non-negative weight. The problem aims to compute two centers (points) on $G$ so that the maximum (weighted) expected distance of the $n$ uncertain points to their own expected closest center is minimized. No previous algorithms are known for this problem. In this paper, we present the first algorithm for this problem and it solves the problem in $O(|G|+ m^{2}n^{2}\log mn)$ time. 
\keywords{Algorithms\and Two-Center\and Cactus Graph\and Uncertain Points}
\end{abstract}

\section{Introduction}\label{sec:intro}
Facility location problems with uncertain demands have been explored a lot~\cite{ref:ChauUn06,ref:HuangSt17,ref:WangAn16,ref:WangCo19, ref:AlipourAp20,ref:AlipourIm21} due to the inherent uncertainty of collected demand data. In this paper, we study a facility location problem with uncertain demands, that is, the two-center problem on a cactus graph network w.r.t. a set of $n$ uncertain points each of which has $O(m)$ possible locations on the network. 

Let $G=(V, E)$ be the given (undirected) cactus graph, which is an undirected graph where no cycles share edges. For any two vertices $u,v$, denote by $e(u,v)$ their edge and let $l(e(u,v))$ be its length. Any point $q\in e(u,v)$ is described by a tuple $(u,v,t)$ where $t$ is the distance of $q$ to $u$ along $e(u,v)$. Moreover, for any two points $p$ and $q$ on $G$, their distance $d(p,q)$ is the length of their shortest path. 

Let $\calP$ be a set of $n$ uncertain points $P_1, \cdots, P_n$. Each $P_i\in\calP$ is associated with $m$ points $p_{i1}, p_{i2},\cdots, p_{im}$ on $G$ so that $P_i$ appears at $p_{ij}$ with the probability $0\leq f_{ij}\leq 1$. Additionally, each $P_i$ has a weight $w_i\geq 0$. 

For any point $q$ on $G$, the distance between any $P_i\in\calP$ and $q$ is the \textit{expected} version, denoted by $\Ed(P_i,q)$, and it is defined as $\sum_{j=1}^{m}f_{ij}\cdot d(p_{ij}, q)$. For any two points $q_1,q_2$ on $G$, we define $\phi(q_1,q_2) = \max_{P_i\in\calP} w_i\cdot\min\{\Ed(P_i,q_1),\Ed(P_i,q_2)\}$. 
The problem aims to compute two points on $G$, that is, centers $q^*_1$ and $q^*_2$, so as to minimize $\phi(q_1,q_2)$. 

When $G$ is a tree network, this problem can be solved in $O(mn\log mn)$ time by the prune-and-search algorithm~\cite{ref:xu2023two}. On cactus graphs, if $m=1$, the problem was addressed in $O(n\log^2 n)$ time~\cite{ref:Ben-MosheEf07}. But for $m>1$, as far as we are aware, only the one-center problem has been explored~\cite{ref:HuCo23}. 
In this paper, we give the first algorithm that solves the two-center problem of uncertain points on a cactus graph in $O(|G| + m^2n^2\log mn)$ time where $|G|$ is the size of $G$, i.e., $|G| = O(|V| + |E|)$. 

\paragraph{\textbf{Related Work}} Besides the above-mentioned previous work, when $G$ is a path, this problem was addressed in $O(mn\log m + n\log n)$ time in our previous work~\cite{ref:XuTh23}. On a general graph, when $m=1$, the problem was studied by Bhattacharya and Shi~\cite{ref:BhattacharyaIm14} and they proposed an $O(|E|^2|V|\log^2 |V|)$-time algorithm. 

Another of most related problems is the general $k$-center problem where $k\geq 2$ centers are placed on the network so as to minimize the maximum (weighted) expected distance between uncertain points and centers. For $m=1$, the $k$-center problem on trees was recently addressed in $O(n\log n)$ time by Wang and Zhang~\cite{ref:WangAn21}. Bai et al.~\cite{ref:BaiTh18} gave an $O(n^2k^2)$-time algorithm for the problem on cactus graphs. When $m>1$, Wang and Zhang~\cite{ref:WangCo19} proposed an $O(n^2\log n\log mn +mn\log^2 mn\log n)$ time algorithm for this problem on tree networks. But no previous work exists for the cactus version except for $k=1$~\cite{ref:HuCo23}. 

In the plane, when $m=1$, the two-center problem under the Euclidean metric can be solved in $O(n\log^2n)$ time by Wang~\cite{ref:WangOn22}. For $m>1$, the general $k$-center problem on uncertain points is NP-hard due to the NP-hardness of the case $m=1$~\cite{ref:MegiddoOn90} and approximation algorithms have been investigated in~\cite{ref:HuangSt17,ref:AlipourIm21}. 
% investigated for solving the general $k$-center problem on uncertain points, which is NP-hard due to the NP-hardness of the case $m=1$~\cite{ref:MegiddoOn90}, and see the PTAS in~\cite{ref:HuangSt17,ref:AlipourIm21} 

\paragraph{\textbf{Our Approach}} Solving this problem is equivalent to solving a vertex-constrained version where 
every location is at a vertex and every vertex holds at least one location, and the reduction can be done in linear time. Hence, we focus on solving the vertex-constrained version. 

Let $\lambda^*$ be the optimal objective value. We first need to address the decision problem, which has not been studied before. Given any value $\lambda>0$, the goal is to determine whether two centers can be placed on $G$ (to cover $\calP$) so that the objective value is no more than $\lambda$. If yes, $\lambda\geq\lambda^*$ and otherwise, $\lambda<\lambda^*$. 

Our algorithm is a generalized version of the decision algorithm~\cite{ref:xu2023two} for $G$ being a tree. Specifically, we first transform $G$ into its tree representation $T$ where each node uniquely represents a cycle, a hinge (that is a vertex on a cycle of degree more than $2$), or a $G$-vertex (that is a vertex out of any cycle). Then, at most two `binary searches' are performed on $T$ where each finds a distinct out-of-cycle edge or a cycle on $G$ that must contain a peripheral center. Last, we decide whether two centers can be placed on the obtained \textit{peripheral-center} edges or cycles to cover $\calP$ under $\lambda$. The running time is dominated by the $O(m^2n^2)$ time to handle the case where placing two centers on a cycle to cover $\calP$ under $\lambda$.   

For any point $q$ on $T$, removing it from $T$ generates several disjoint subtrees, and each of them and $q$ induce a \textit{hanging} subtree of $q$. There are two key lemmas that determine which hanging subtrees of $q$ contain the two \textit{critical} out-of-cycle edges or cycles on $G$ that contain $q^*_1$ and $q^*_2$. Based on these lemmas, we can find the critical edges or cycles on $G$ in a similar way as our decision algorithm. Once the critical edges or cycles are obtained, $\lambda^*$ can be computed in $O(m^2n^2\log mn)$ time. 

\section{Preliminaries}\label{sec:pre}
Lemma~$5$ in~\cite{ref:HuCo23} reveals that any general instance of the one-center problem w.r.t. $\calP$ on $G$ can be reduced into a vertex-constrained instance. For the two-center problem, we have the similar result. 

\begin{lemma}\label{lem:twocenterreduction}
    Any general instance of the two-center problem can be reduced to a vertex-constrained instance in $O(|G| + mn)$ time. 
\end{lemma}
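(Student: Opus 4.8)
The plan is to adapt, step by step, the reduction from a general one-center instance to a vertex-constrained one of~\cite{ref:HuCo23} (Lemma~5), verifying at each step that the two-center objective $\phi(\cdot,\cdot)$ is unaffected. The reduction proceeds in two phases. \emph{Phase 1 (locations onto vertices).} For every $P_i$ and every location $p_{ij}$ lying in the interior of an edge $e(u,v)$, subdivide $e(u,v)$ at $p_{ij}$ into two sub-edges whose lengths are the two portions of $l(e(u,v))$ cut off by $p_{ij}$. Carrying this out for all $O(mn)$ locations (after bucketing them by edge and ordering them along each edge) gives a graph $G_1$ of size $O(|G|+mn)$. It is still a cactus, since subdivision preserves the cycle structure; it preserves $d(\cdot,\cdot)$ between any two points of $G$ (identified with their images in $G_1$), hence every expected distance $\Ed(P_i,\cdot)$ and the function $\phi$; and now every $p_{ij}$ is a vertex of $G_1$. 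This phase runs in $O(|G|+mn)$ time.

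\emph{Phase 2 (a location on every vertex).} Call a vertex \emph{empty} if no $p_{ij}$ lies on it. Repeatedly delete an empty degree-$1$ vertex together with its incident edge (which is location-free), and suppress an empty degree-$2$ vertex $v$ with neighbours $u,w$ by replacing $e(u,v),e(v,w)$ with one edge $e(u,w)$ of length $l(e(u,v))+l(e(v,w))$ (if $u=w$, impossible when $G$ is simple, simply keep $v$). Each such operation keeps the graph a cactus and preserves all pairwise distances among surviving points; and it leaves the optimal value unchanged, since a center on a deleted pendant edge may be moved to its base $u$ without increasing any $\Ed(P_i,\cdot)$, while a center on the two-edge path from $u$ to $w$ through $v$ maps to the point at the same offset on $e(u,w)$. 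Iterating until every empty vertex has degree $\ge 3$ yields a cactus $G_2$ in $O(|G_1|)$ time. The surviving empty vertices are few: a degree-$\ge 3$ vertex of a cactus is a cut vertex, and a simple counting argument --- each extra cycle or branch that raises a vertex to degree $\ge 3$ forces a distinct vertex of degree $\le 2$ elsewhere --- shows that a cactus has, up to an additive constant, at most as many degree-$\ge 3$ vertices as degree-$\le 2$ vertices, the latter being $\le mn$ here since each carries a location; hence $G_2$ has size $O(mn)$. We then absorb each empty vertex of $G_2$ as an additional probability-$0$ location, distributing these vertices among the $n$ uncertain points so that each gains $O(m)$ of them (possible as there are $O(mn)$ of them); equivalently, add $O(n)$ weight-$0$ uncertain points carrying these vertices as their locations. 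Since the added locations have probability $0$ and the added points have weight $0$, every $\Ed(P_i,\cdot)$ and hence $\phi$ is unchanged, and now every vertex of $G_2$ carries a location; thus this is a vertex-constrained instance with $O(n)$ uncertain points, $O(m)$ locations each, and an $O(mn)$-size graph. As each step preserved $\phi$ and (by the relocation argument) the optimal value, the two instances have the same optimum and optimal centers correspond; the total running time is $O(|G|+mn)$.

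The step I expect to be the main obstacle is the size control in Phase~2: bounding the number of empty vertices left after the degree-$\le 2$ pruning by $O(mn)$ through the block structure of cacti, so that $G_2$ stays of size $O(mn)$ and the absorption step does not push the number of uncertain points or of locations per point beyond $O(n)$ and $O(m)$; for a general graph this size bound fails, which is exactly why the whole approach is confined to cactus (indeed tree-like) inputs. Everything else is routine bookkeeping: checking that subdivision, pendant-edge deletion, and degree-$2$ suppression preserve all pairwise distances, hence all $\Ed(P_i,\cdot)$ and the value of $\phi$, and that relocating centers off the pruned parts never raises the objective.
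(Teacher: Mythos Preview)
Your reduction follows the same two-phase template as the paper (subdivide at locations, then prune empty low-degree vertices and pad the survivors with zero-probability locations), but Phase~2 has a real gap precisely where you flagged it.

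The claim ``if $u=w$, impossible when $G$ is simple'' is false once you start suppressing: contracting an empty degree-$2$ vertex on a triangle already produces a $2$-cycle (two parallel edges), and more generally any cycle whose only hinge is $h$ and whose non-hinge vertices are all empty shrinks, after repeated suppression, to a $2$-cycle $h$--$v$--$h$ with $v$ empty and $u=w=h$. You then keep $v$. But the number of such pendant empty cycles in the original input can be $\Theta(|G|)$ (take a single location-bearing vertex $h$ with $|G|$ empty triangles attached), so $G_2$ need not have size $O(mn)$, and the subsequent ``each degree-$\le 2$ vertex carries a location'' step of your counting argument fails. The paper closes exactly this hole by an extra pruning rule: every cycle with a single hinge and all non-hinge vertices empty is deleted outright (its interior cannot host a center, since every $\Ed(P_i,\cdot)$ is minimized at the hinge over such a cycle); only after that, together with the degree-$1$/degree-$2$ pruning, does the $O(mn)$ bound go through.

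One further difference worth noting: for an empty cycle with exactly two hinges, your suppression would leave two parallel edges between the hinges, which is metrically exact and needs no extra argument. The paper instead replaces the cycle by a \emph{single} edge of length equal to the shorter arc, discarding the longer arc; this does change distances on the discarded arc, and the paper appeals to a lemma from~\cite{ref:HuCo23} (no optimal center lies on the longer arc) to justify it for the two-center objective. Your variant is cleaner there, but you must then allow $G_2$ to be a multigraph cactus and add the missing pendant-cycle deletion to recover the size bound.
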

\begin{proof}
    The reduction is similar to Lemma~$5$ in~\cite{ref:HuCo23}. For completeness and correctness, we include the details in the following. 
    First, we join into $G$ a vertex for every point that holds locations but is not at any vertex, and assign all locations at the point to the new vertex created for it. 
    
    We say that a subgraph of $G$ is \textit{empty} if it is free of any locations. 
    Next, starting from any out-of-cycle vertex or any hinge, we perform a depth-first search on $G$ to shrink $G$ as follows. For every cycle of only one hinge, if its every non-hinge vertex is empty, we remove this cycle from $G$ except for its only hinge. Also, we remove from $G$ every empty vertex of degree $1$ including the generated by removing such above cycles and empty vertices. Clearly, these removed components of $G$ contain no centers. 
    
    Furthermore, for each cycle with only two hinges, if its non-hinge vertices are all empty, we remove the cycle from $G$ by connecting its two hinges directly with an edge of length equal to their shortest-path length along this cycle and then pruning all its non-hinge vertices. Because it was proved in Lemma~$5$~\cite{ref:HuCo23} that the center of $\calP$ on $G$ is not interior of the longer path between two hinges in such a cycle. Clearly, this result also applies to our problem since the center that leads $\lambda^*$ is the center w.r.t. a subset of uncertain points in $\calP$. 
    
    Last, we perform a depth-first search on $G$ as the above to remove every empty vertex of degree $2$ by connecting its two adjacent vertices directly. 

    At this point, the size of $G$ is $O(mn)$ and so it contains $O(mn)$ empty vertices (each of degree at least $3$). Create $O(m)$ additional locations of zero probabilities for each $P_i\in\calP$ and assign each of them to a unique empty vertices of $G$. It follows that each $P_i\in\calP$ has $O(m)$ locations and every location is 
    at a vertex of $G$ and every vertex has at least one location. 
    The reduction takes $O(|G|+mn)$ time. It is clear to see that solving the two-center problem w.r.t. $\calP$ on $G$ is equivalent to solving the obtained vertex-constrained instance. \qed
\end{proof}

Unless otherwise specified, we assume that every location of $\calP$ is at a vertex of $G$ and every vertex contains at least one location. 

Below, we will introduce some terminologies defined in the literature~\cite{ref:BurkardAL98,ref:HuCo23}, and some important lemmas in the one-center algorithm~\cite{ref:HuCo23}, which support the correctness of our algorithm. 

A vertex in a cycle of degree at least $3$ is called a \textit{hinge}, so any cactus graph can be decomposed at its hinges into cycles and tree subgraphs called grafts. A $G$-vertex is a vertex (in a graft subgraph) that is not included in any cycle. Any cactus graph can be represented by a tree that is generated by replacing every cycle with a node and then connecting it with all its hinges by edges of length zero (see Fig.~\ref{fig:fig1}). Such a tree representation can be built in linear time~\cite{ref:BurkardAL98,ref:HuCo23}. 

Denote by $T$ the tree representation of $G$. We say a node of $T$ representing a $G$-vertex is a $G$-node, a node representing a hinge is a hinge node, and a node representing a cycle is a cycle node. On $T$, every cycle node stores all the vertex adjacency information of a cycle on $G$ as well as locations on it. Its every edge between a $G$-node and a hinge node is also in $G$. The locations at a hinge of $G$ are considered to be in its hinge nodes on $T$ rather than its copies in its cycle nodes of $T$. 

\begin{figure}[ht]
\centering
\includegraphics[width=0.6\textwidth]{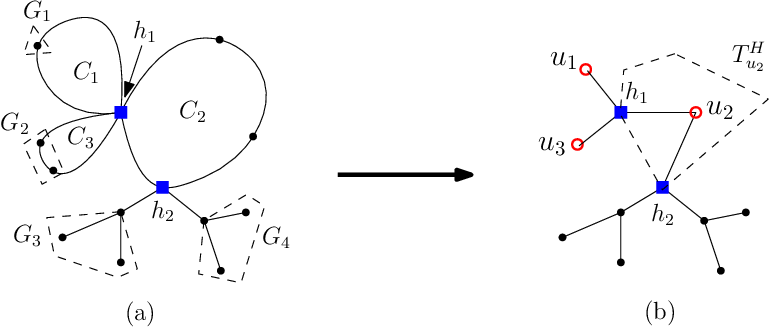}
\caption{(a) Illustrating a cactus graph that consists of three cycles $\{C_1, C_2, C_3\}$, two hinges $\{h_1, h_2\}$ (squares), and six G-vertices (disks). Cycle $C_2$ has four split subgraphs $\{G_1, G_2, G_3, G_4\}$, and four hanging subgraphs $\{C_1, C_3, G_3\cup\{h_2\}, G_4\cup\{h_2\}\}$. (b) Illustrating the tree representation of the cactus graph in (a) where each cycle is uniquely represented by a (circular) node. Specifically, node $u_1$ is for cycle $C_1$, node $u_2$ for cycle $C_2$, and node $u_3$ for cycle $C_3$. Additionally, $u_2$ and its all adjacent hinge nodes induce its $H$-subtree $T^H_{u_2}$. Every hanging subgraph of $C_2$ is uniquely represented by a hanging subtree of $T^H_{u_2}$.}  
\label{fig:fig1}
\end{figure}

A point on $G$ is called an {\em articulation} point if it is out of any cycles except for hinges. Let $q$ be any articulation point on $G$. Removing $q$ from $G$ results in multiple disjoint subgraphs. Each subgraph is called a \textit{split} subgraph of $q$, and $q$ and its every split subgraph induce a \textit{hanging} subgraph of $q$. Generally, for any subgraph $G'$ of $G$, removing $G'$ generates several split subgraphs of $G'$, and each split subgraph and its adjacent vertices on $G'$ form a hanging subgraph of $G'$. These notions also apply to $T$. 

For any $P_i\in\calP$, the sum of probabilities of all $P_i$'s positions in a subgraph $G'$ of $G$ is referred to as the \textit{probability sum} of $P_i$ in $G'$. The median of $P_i$ is a point on $G$ that minimizes $\Ed(P_i,x)$. Note that the median of $P_i$ may not be unique but let $x_i^*$ denote any of them. The following lemmas are proved in the one-center algorithm~\cite{ref:HuCo23}. 

\begin{lemma}\label{lem:articulatemedian}
\cite{ref:HuCo23} Consider any articulation point $x$ on $G$ and any uncertain point $P_i\in\calP$.
    \begin{enumerate}
        \item If $x$ has a split subgraph whose probability sum of $P_i$ is greater than $0.5$, then its median $x_i^*$ is on the hanging subgraph including that split subgraph; 
        
        \item The point $x$ is $x_i^*$ if $P_i$'s probability sum in each split subgraph of $x$ is less than $0.5$;
        
        \item The point $x$ is $x_i^*$ if $x$ has a split subgraph with $P_i$'s probability sum equal to $0.5$.      
    \end{enumerate}
\end{lemma}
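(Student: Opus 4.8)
The plan is to derive all three statements from a single inequality describing how $\Ed(P_i,\cdot)$ behaves as one moves away from $x$ into a hanging subgraph. Fix the articulation point $x$, let $S_1,\dots,S_k$ be its split subgraphs and $H_t=S_t\cup\{x\}$ the associated hanging subgraphs, and write $\beta_t$ for the probability sum of $P_i$ in $S_t$; since the probabilities of $P_i$ over all its locations sum to $1$, we have $\sum_t\beta_t\le 1$, and in particular $\beta_s\le 1-\beta_t$ whenever $s\ne t$. The one structural fact I rely on is that $x$, being an articulation point, is a cut point of $G$: for any $q\in H_t$ and any location $p_{ij}\notin S_t$, every $q$--$p_{ij}$ path passes through $x$, so $d(q,p_{ij})=d(q,x)+d(x,p_{ij})$.

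First I would split the expected distance according to whether a location lies in $S_t$: for $q\in H_t$,
\[
\Ed(P_i,q)=g_t(q)+(1-\beta_t)\,d(q,x)+c_t,
\]
where $g_t(q):=\sum_{p_{ij}\in S_t}f_{ij}\,d(q,p_{ij})$ and $c_t:=\sum_{p_{ij}\notin S_t}f_{ij}\,d(x,p_{ij})$ is independent of $q$. Applying the triangle inequality $d(q,p_{ij})\ge d(x,p_{ij})-d(q,x)$ to each term of $g_t(q)$ gives $g_t(q)\ge g_t(x)-\beta_t\,d(q,x)$, and substituting back (using $g_t(x)+c_t=\Ed(P_i,x)$) yields, for every $q\in H_t$, the key estimate
\[
\Ed(P_i,q)\ \ge\ \Ed(P_i,x)+(1-2\beta_t)\,d(q,x).
\]

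The three cases then fall out. If every split subgraph has $\beta_t<0.5$, then $1-2\beta_t>0$ for all $t$, so $\Ed(P_i,q)>\Ed(P_i,x)$ for every $q\ne x$ (each point of $G$ lies in some $H_t$); hence $x$ is the unique minimizer of $\Ed(P_i,\cdot)$, proving (2). If some split subgraph satisfies $\beta_t=0.5$, then $\beta_s\le 1-\beta_t=0.5$ for every $s$, so $1-2\beta_s\ge 0$ and the estimate gives $\Ed(P_i,q)\ge\Ed(P_i,x)$ for all $q$; thus $x$ is a global minimizer and may be taken as $x_i^*$, proving (3). Finally, if some $S_t$ has $\beta_t>0.5$, then $\beta_s\le 1-\beta_t<0.5$ for every $s\ne t$, so $\Ed(P_i,q)>\Ed(P_i,x)\ge\min_{G}\Ed(P_i,\cdot)$ for every $q$ outside $H_t$ (such a $q$ lies in $H_s\setminus\{x\}$ for some $s\ne t$); hence no point outside $H_t$ can be a median, so $x_i^*\in H_t$, proving (1).

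The distance decomposition and the triangle-inequality bookkeeping are routine; the one point I would be most careful about is the cut-point property underlying $d(q,p_{ij})=d(q,x)+d(x,p_{ij})$, which is exactly where being an \emph{articulation} point (as opposed to an arbitrary interior point of a cycle) is used. Beyond that, the argument is the familiar $1$-median monotonicity argument transported to the cactus setting.
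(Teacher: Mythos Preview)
The paper does not supply its own proof of this lemma; it is quoted verbatim from \cite{ref:HuCo23} and used as a black box. Your argument is correct: the decomposition $\Ed(P_i,q)=g_t(q)+(1-\beta_t)\,d(q,x)+c_t$ relies precisely on the cut-point property of an articulation point, the triangle-inequality bound on $g_t$ is sound, and the resulting estimate $\Ed(P_i,q)\ge\Ed(P_i,x)+(1-2\beta_t)\,d(q,x)$ immediately yields all three cases. This is the standard weighted $1$-median monotonicity argument for trees, carried over to cacti via the observation that an articulation point disconnects $G$; presumably the proof in \cite{ref:HuCo23} runs along the same lines, so there is nothing to contrast.
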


Let $u$ be any cycle node of $T$. The subtree induced by $u$ and all its adjacent hinge nodes on $T$ is called the \textit{H-subtree} of $u$~\cite{ref:HuCo23}. Each hanging subtree of $u$'s $H$-subtree uniquely represents a hanging subgraph of its cycle on $G$, and vice versa. 
The following lemma is a restatement of Corollary 2 in~\cite{ref:HuCo23}.

\begin{lemma}\label{lem:cyclemedian}
\cite{ref:HuCo23} Consider any cycle node $u$ on $T$ and any uncertain point $P_i\in\calP$. 
\begin{enumerate}
    \item If the H-subtree $T^H_u$ of $u$ has a split subtree whose probability sum of $P_i$ is greater than $0.5$, then $x_i^*$ is in that split subtree; 
    
    \item If $P_i$'s probability sum in each split subtree of $T^H_u$ is less than $0.5$, then $x_i^*$ is in the cycle $u$ represents;  

    \item If $T^H_u$ has a split subtree whose probability sum of $P_i$ equals to $0.5$, then $x_i^*$ is at that hinge node of $T^H_u$ that is adjacent to the split subtree. 
\end{enumerate}
\end{lemma}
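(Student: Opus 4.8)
The plan is to obtain all three statements as corollaries of Lemma~\ref{lem:articulatemedian}, applied at the hinges of the cycle that $u$ represents. Write $C$ for that cycle and $h_1,\dots,h_k$ for its hinges; in $T$ these are exactly the hinge nodes adjacent to $u$, and $T^H_u$ is $u$ together with $h_1,\dots,h_k$. The first thing to set up is the dictionary between $T$ and $G$: removing $T^H_u$ from $T$ leaves subtrees each hanging from a single $h_j$, and each such split subtree represents on $G$ a connected component of $G\setminus V(C)$, i.e.\ a split subgraph of $C$ attached at $h_j$; moreover, because $G$ is a cactus, such a component touches $C$ at the single vertex $h_j$ (two attachment vertices would force two cycles sharing the connecting path, hence an edge), so it is also a split subgraph of the articulation point $h_j$. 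Since a location lying at a hinge is kept at its hinge node in $T$ (equivalently, counted at $h_j$ on $G$, not in any component of $G\setminus V(C)$), a split subtree of $T^H_u$ and the split subgraph of $h_j$ it represents carry the same locations, hence the same probability sum of $P_i$.

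With this in hand, statement~1 is immediate: if a split subtree attached at $h_j$ has probability sum of $P_i$ exceeding $0.5$, then the split subgraph $G_S$ of $h_j$ it represents does too, so Lemma~\ref{lem:articulatemedian}(1) places $x_i^*$ in the hanging subgraph $G_S\cup\{h_j\}$, which is precisely the hanging subtree determined by that split subtree. Statement~3 is the same via Lemma~\ref{lem:articulatemedian}(3): probability sum exactly $0.5$ in $G_S$ makes $h_j$ a median, and $h_j$ is the hinge node adjacent to that split subtree. For statement~2, fix a hinge $h_j$: every split subgraph of $h_j$ lying ``away from'' $C$ corresponds to a split subtree of $T^H_u$, hence has probability sum below $0.5$, while the remaining split subgraph $G_C^{(j)}$ contains $C\setminus\{h_j\}$ and everything hanging from the other hinges. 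If $G_C^{(j)}$ has probability sum at most $0.5$, then all split subgraphs of $h_j$ are $\le 0.5$ with at most one equal to $0.5$, so Lemma~\ref{lem:articulatemedian}(2) or~(3) makes $h_j$ itself a median and we are done since $h_j\in C$. Otherwise $G_C^{(j)}$ has probability sum greater than $0.5$ and Lemma~\ref{lem:articulatemedian}(1) confines $x_i^*$ to $G_C^{(j)}\cup\{h_j\}$, disjoint from every split subgraph of $C$ at $h_j$. Running this over all $h_j$: either some hinge is a median, or $x_i^*$ avoids the split subgraphs of $C$ at every hinge; since these cover $G\setminus C$, in both cases $x_i^*$ lies on $C$.

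The substance is the bookkeeping of the first paragraph plus the case split for statement~2, so those are where I expect the real work to sit rather than in any calculation. The point most likely to need care is the interplay between the exact value $0.5$ and the non-uniqueness of medians: each statement should be read as ``some median $x_i^*$ lies in the indicated place'' (and, in statement~1, ``in that split subtree'' is naturally read as ``in the hanging subtree it determines'', matching the conclusion of Lemma~\ref{lem:articulatemedian}), and in the analysis of statement~2 the sub-case $G_C^{(j)}=0.5$ must not be overlooked, since it is covered only by Lemma~\ref{lem:articulatemedian}(3). One should also confirm, via the cactus property, that a component of $G\setminus V(C)$ meets $C$ at exactly one vertex, as this is what lets a split subtree of $T^H_u$ be reinterpreted as a split subgraph of a single articulation point. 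Since the statement is a restatement of Corollary~2 of~\cite{ref:HuCo23}, an alternative is simply to invoke that result; the sketch above is the route I would take to keep the paper self-contained.
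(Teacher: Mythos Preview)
The paper does not prove this lemma at all: it is stated with the citation \cite{ref:HuCo23} and introduced as ``a restatement of Corollary~2 in~\cite{ref:HuCo23}'', with no proof environment following. So there is nothing in the paper to compare your argument against beyond the bare citation.

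Your derivation from Lemma~\ref{lem:articulatemedian} is correct and is exactly the kind of self-contained justification one would supply in lieu of the citation; you yourself note this alternative at the end. The dictionary you set up between split subtrees of $T^H_u$ and split subgraphs of the hinge $h_j$ is the right one, and the case analysis for statement~2 (running Lemma~\ref{lem:articulatemedian} at each hinge and intersecting the resulting constraints) is sound. One small refinement: for statement~1 you land on ``$x_i^*$ lies in the hanging subgraph $G_S\cup\{h_j\}$'' via Lemma~\ref{lem:articulatemedian}(1), whereas the lemma as stated asserts the median is in the split subtree itself (excluding $h_j$). This is immediate once you observe that probability strictly above $0.5$ in $G_S$ forces the expected distance to strictly decrease when moving from $h_j$ into $G_S$, so $h_j$ cannot be a median; it would be cleaner to say this than to reinterpret the statement. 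With that tweak the argument is complete.
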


Denote by $G_u$ the subgraph of $G$ that any node $u$ on $T$ represents. For any subgraph $G'$, let $\calP^>(G')$ (resp., $\calP^{=}(G')$) be the subset of all uncertain points whose probability sums in $G'$ are greater than (equal to) $0.5$. For any node $u$ on $T$, generally, $G_u$ is assumed to have $s$ split subgraphs $G_1, \cdots, G_s$ and thereby $s$ hanging subgraphs $H_1,\cdots, H_s$. 
Denote by $h_i$ the vertex on $G_u$ that connects $G_u$ and $G_i$ or $H_i$. Clearly, if $u$ is a cycle node then $h_i$ is its hinge, and otherwise, $h_i$ is $G_u$ itself. 

As the one-center algorithm~\cite{ref:HuCo23}, we can perform all computations on $T$ instead of $G$. The following operations can be carried out on $T$ in $O(mn)$ time~\cite{ref:HuCo23}, which supports the efficiency proof of our algorithm. 

\begin{enumerate}
    \item Given any point $x$ on $G$, computing for all $1\leq i\leq n$ the expected distance $\Ed(P_i,x)$ in $O(mn)$ time;
    \item Given any node $u$ on $T$, with $O(mn)$-time preprocessing work, for any $1\leq i\leq s$ and any point $x$ of $G$, computing the distance $d(h_i, x)$ in $O(1)$ time; 
    \item Given any node $u$ on $T$, computing for each $1\leq i\leq s$ the probability sum in $G_i$ of each $P_j$ in $\calP^>(G_i)\cup\calP^=(G_i)$ and subsets $\calP^>(G_i)$ and $\calP^=(G_i)$ in $O(mn)$ time. 
\end{enumerate}

\section{The Decision Algorithm}\label{sec:decision}
Given any value $\lambda>0$, the decision problem is to determine whether two centers can be placed on $G$ so that the objective value is no more than $\lambda$. If yes, $\lambda\geq\lambda^*$ and it is feasible; otherwise, $\lambda<\lambda^*$ so it is infeasible. For any point $x$ of $G$, we say that $x$ covers a subset $\calP'$ of $\calP$ (under $\lambda$) iff $\max_{P_i\in\calP'}w_i\cdot Ed(P_i,x)\leq\lambda$. In other words, the decision problem aims to find at most two points (to place centers) on $G$ that cover $\calP$ completely.  

We say that an edge $e$ of $G$ is a \textit{peripheral-center} edge (w.r.t. $\lambda$) if it satisfies the following conditions: (1) $e$ is not in any cycle; (2) $e$ has a hanging subgraph, say $G'$, so that the subset of all uncertain points that have their probability sums in $G'$ larger than $0.5$ can be covered by a point of $e$; (3) an uncertain point whose median is in $G'$ cannot be covered by any point in $G/\{G'\cup e\}$. Lemma~\ref{lem:articulatemedian} implies that for each $P_i$ whose median is in $G'$, $\Ed(P_i,x)$ is monotonically increasing as $x$ moves along $e$ away from $G'$ and $\Ed(P_i,x)$ at any point of $G/\{G'\cup e\}$ is not less than its value at $e$'s incident vertex not in $G'$. Hence, in order to place least centers to cover $\calP$, a center must be placed on such a peripheral-center edge $e$ to cover all uncertain points with medians in $G'$ and most uncertain points with medians in $G/G'$; such a center on it is called a \textit{peripheral} center (in that all other necessary centers are placed in its hanging subgraph excluding $G'$). 

% Additionally, by condition (2), there must be an uncertain point in $\calP'$ whose expected distance at any point of $G/\{G'\cup e\}$ is larger than $\lambda$. 
% The placed center is a peripheral center in the sense that every other necessary center is in the same hanging subgraph of it. 
% (since one of its hanging subgraph contains no centers and the subset of uncertain points that have medians in that hanging subgraph is covered by it completely). 

% When $G$ has no cycles, if at least three centers are necessary to cover $\calP$, there must be at least two peripheral-center edges~\cite{ref:XuTh23}. 
% So, the decision algorithm~\cite{ref:XuTh23} for trees first performs `binary searches' on $G$ to find the peripheral-center edge(s), and then determines whether two points can be found on the peripheral-center edge(s) to cover $\calP$; if yes, $\lambda$ is feasible and otherwise, $\lambda$ is infeasible. 

Clearly, peripheral-center edges are ``outermost'' subgraphs of smallest size on $G$ that must contain (peripheral) centers under $\lambda$. Hence, the decision algorithm~\cite{ref:XuTh23} for $G$ being a tree performs ``binary searches'' on $G$ to find two peripheral-center edges, and then determines whether two points can be found on the peripheral-center edges to cover $\calP$; if yes, $\lambda$ is feasible and otherwise, more than two centers are needed so $\lambda$ is infeasible. 

In general, $G$ contains cycles. Besides of peripheral-center edges, $G$ might have ``outermost'' cycles that must contain centers under $\lambda$. Similarly, a cycle $C$ of $G$ is a \textit{peripheral-center} cycle under $\lambda$ if it satisfies the following conditions: (1) $C$ has at most one hanging subgraph, say $G''$, so that the subset of uncertain points that have probability sums in $G''$ larger than $0.5$ cannot be covered by a point of $C$; (2) an uncertain point with its median in $C\cup\{G/G''\}$ cannot be covered by any point of $G''/\{G''\cap C\}$, where $G''\cap C$ is their common vertex. Because each uncertain point with its median in $C\cup\{G/G''\}$ has its expected distance at any point of $G''/\{G''\cap C\}$ not less than its expected distance at vertex $G''\cap C$. In order to place least centers covering $\calP$, one must place center(s) in such a peripheral-center cycle $C$ to cover the subset of uncertain points satisfying condition (2) and most uncertain points not in this subset, and other necessary centers must be placed in $C\cup G''$.

Our algorithm is a generalization of the decision algorithm~\cite{ref:XuTh23} for trees: First, it finds on the tree representation $T$ of $G$ two smallest subtrees that each represents a peripheral-center edge or cycle on $G$ under $\lambda$. Next, we determine if two centers can be placed on the obtained peripheral-center edge(s) and cycle(s) to cover $\calP$ under $\lambda$. If there exist two points on them that can cover $\calP$, then $\lambda\geq\lambda^*$, and otherwise, more than two points are needed to cover $\calP$, so $\lambda<\lambda^*$. 

% Because a peripheral-center edge or cycle must contain centers in order to place least centers on $G$ to cover $\calP$. 

Let $c^1$ and $c^2$ represent two arbitrary peripheral-center edges or cycles on $G$. Note that $c^1$ is $c^2$ if all necessary centers covering $\calP$ under $\lambda$ are in the same cycle or out-of-cycle edge. Let $T^1$ and $T^2$ represent the two subtrees of $T$ that respectively consist of $c^1$ and $c^2$ (i.e., their representations on $T$). Initially, we set $T^1 = T^2 = T$, and then assign a flag $f$ to every node $u$ of $T^1$ (resp., $T^2$) so that $u$'s $f$ is true iff the hanging subtrees of $u$ in $T/T^1$ (resp., $T/T^2$) must contain center(s) under $\lambda$. The flag $f$ of every node in $T^1$ (resp., $T^2$) is initialized as false. 

We shall now present how to find a peripheral-center edge or cycle node on $T^1$ and $T^2$ in turn. This routine needs the supports of Lemmas~\ref{lem:centerdetectingarticulate} and~\ref{lem:centerdetectingcycle} whose proofs are delayed to Subsection~\ref{sec:lemmas}. 

Compute the \textit{centroid} $c$ on $T^1$ which is a node of $T^1$ whose hanging subtrees each is of size no more than half of $T^1$ 
and can be found in $O(|T^1|)$ time~\cite{ref:KarivAn79, ref:KeikhaCl21}. Then, if $c$ is a hinge or $G$-node, by applying Lemma~\ref{lem:centerdetectingarticulate} to $c$, we determine 
whether a center must be placed at $G_c$ and if not, which hanging subtree of $c$ contains a peripheral-center edge or cycle. If only $c$ is returned, which means that $G_c$ must be a center, then we set $T^1$ as $c$.  

Generally, Lemma~\ref{lem:centerdetectingarticulate} returns one or two nodes that are adjacent to $c$ in $T^1$. In the former case where only one adjacent node of $c$ is returned, the corresponding hanging subtree of $c$ on $T$ contains all necessary centers. Hence, we find in $T^1$ $c$'s hanging subtree including this node and reset $T^1$ as this subtree. The latter case means that the two corresponding hanging subtrees of $c$ on $T$ each contains a peripheral-center edge or cycle. So, we find in $O(|T^1|)$ time the two hanging subtrees of $c$ in $T^1$ respectively containing the two obtained nodes. Set $T^1$ as the one of them that excludes any node with a true flag. Last we set $c$'s flag $f$ in $T^1$ as true.

On the other hand, $c$ is a cycle node. Instead, we apply Lemma~\ref{lem:centerdetectingcycle} to decide whether $G_c$ is a peripheral-center cycle and which hanging subtree of $c$'s $H$-subtree $T^H_c$ in $T^1$ contains a peripheral-center edge or cycle. As described in the proof for Lemma~\ref{lem:centerdetectingcycle}, if $c$ is obtained, which means that $G_c$ is a peripheral-center cycle, 
then we let $T^1$ be $c$. If one of its adjacent hinge nodes is returned, which can be known in $O(|T^1|)$ time, then a center must be placed at that hinge; so we set $T^1$ as that hinge node. 

Otherwise, we obtain two nodes that are respectively in two split subtrees of $T^H_c$ and adjacent to $c$'s hinge node(s), or obtain only one such node. For the former situation, we find in linear time a hanging subtree of $T^H_c$ in $T^1$ that includes one of the two nodes but excludes any node with a true flag $f$; next we set $T^1$ as this subtree and set the flag $f$ as true of $c$'s hinge node in $T^1$. For the latter case, we find the hanging subtree of $T^H_c$ in $T^1$ that contains the only obtained node and reset $T^1$ as this subtree. 

We recursively apply the above procedure to $T^1$ to find a peripheral-center edge or cycle until $T^1$ consists of only one node or one edge. Clearly, it takes $O(\log mn)$ recursive steps, so the running time is $O(mn\log mn)$. 

Further, depending on whether $T^1$ is an edge or a node on $T$, we deal with each of the following cases. 

\begin{enumerate}
    \item $T^1$ is a hinge node or $G$-node. A center must be placed at vertex $G_{T^1}$ on $G$, which is represented by $T^1$. We can decide the feasibility of $\lambda$ in $O(mn\log mn)$ time as follows. Set the weight of every uncertain point that can be covered by $G_{T^1}$ under $\lambda$ as zero. Then we apply the one-center algorithm~\cite{ref:HuCo23} to solve the one-center problem of $G$ w.r.t. $\calP$ in $O(mn\log mn)$ time. If the obtained objective value is not greater than $\lambda$, then $G_{T^1}$ and the obtained center can cover $\calP$ completely under $\lambda$, which means that $\lambda$ is feasible. Otherwise, $\lambda$ is not feasible. 
    
    \item $T^1$ is a cycle node. By the definition, we first
    determine which hanging subtree of $T^1$'s $H$-subtree contains a peripheral-center edge or cycle. If none, then all necessary centers can be placed on $G_{T^1}$. Hence, Lemma~\ref{lem:dec-one-cycle} presented in Subsection~\ref{sec:lemmas} is called to decide the feasibility of $\lambda$ in $O(m^2n^2)$ time. Otherwise, we let $T^2$ be that hanging subtree of $T^1$'s $H$-subtree in $T$ and set the flag $f$ as true for the hinge node of $T^2$ that connects $T^1$ and $T^2$ in $T$, which can be carried out in $O(|T|)$ time. 
    
    \item $T^1$ is an edge. If it is between a hinge and a cycle node, clearly, this cycle node represents a peripheral-center cycle. We set $T^1$ as this cycle node and then handle this case in the above way in $O(m^2n^2)$ time. Otherwise, $T^1$ is an out-of-cycle edge on $G$. We call Lemma~\ref{lem:dec-one-edge} in Subsection~\ref{sec:lemmas} to decide the feasibility of $\lambda$ in $O(mn\log mn)$ time. 
\end{enumerate}

In general, $T^1$ is a cycle node representing a peripheral-center cycle, i.e., $c^1$, and $T^2$ is a subtree of $T$ which contains no vertices in $T^1$ and where a leaf connects it and $T^1$ on $T$ and is of flag $f$ being true. $G_{T^2}$ contains all other necessary centers under $\lambda$. 

Proceed with finding a peripheral-center edge or cycle on $T^2$ recursively in the above way. After $O(\log mn)$ recursive steps, $T^2$ shrinks down to a node or an edge on $T$. If it is an out-of-cycle edge, we call Lemma~\ref{lem:dec-one-edge} to decide the feasibility of $\lambda$ in $O(mn\log mn)$ time. Otherwise, $T^2$ contains exactly one cycle node, which is not same as $T^1$; $c^2$ is the cycle on $G$ that $T^2$ represents. Subsequently, Lemma~\ref{lem:dec-two-cycle} in Subsection~\ref{sec:lemmas} is applied to $T^1$ and $T^2$, which decides in $O(mn^2)$ time the feasibility of $\lambda$. 

Regarding the time complexity, in the  worst case, cycle $c^1$ (i.e., $G_{T^1}$) contains all necessary centers so that the feasibility of $\lambda$ is decided in $O(m^2n^2)$ time by Lemma~\ref{lem:dec-one-cycle}, 
which dominates the time to find $c^1$. Hence, our algorithm decides the feasibility of any given $\lambda$ in $O(m^2n^2)$ time. 

\begin{lemma}\label{lem:decision}
    The decision problem can be solved in $O(m^2n^2)$ time. 
\end{lemma}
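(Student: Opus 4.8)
The plan is to argue that the algorithm laid out above in this section both (i) correctly decides whether $\lambda \ge \lambda^*$ and (ii) runs within the claimed bound, so that the proof is essentially a consolidation of the preceding discussion together with the guarantees of the auxiliary lemmas. I would begin by recording the structural facts about peripheral-center edges and cycles established earlier: whenever $\lambda$ is feasible, there is a placement of at most two centers covering $\calP$ in which each center lies on a peripheral-center edge or inside a peripheral-center cycle, and the ``outermost'' such subgraphs are forced. Correctness then reduces to showing the two-phase search correctly locates such subgraphs.

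For the search phase, I would proceed by induction on the number of recursive steps. The invariant is that the current subtree $T^1$ (and later $T^2$) still contains a target peripheral-center edge or cycle, and that every node of $T^1$ carrying a true flag $f$ indeed has all its hanging subtrees outside $T^1$ forced to hold a center. At a centroid $c$ of $T^1$: if $c$ is a hinge or $G$-node, Lemma~\ref{lem:centerdetectingarticulate} either forces a center at $G_c$ (so we set $T^1 := c$) or returns one or two adjacent nodes pinpointing the hanging subtree(s) of $c$ that must contain a peripheral center, and we reset $T^1$ to the appropriate one while marking $c$'s flag when two such subtrees arise; if $c$ is a cycle node, Lemma~\ref{lem:centerdetectingcycle} analogously either certifies $G_c$ a peripheral-center cycle, forces a center at an incident hinge, or identifies the relevant hanging subtree of $T^H_c$. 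In each branch the subtree handed to the next recursion is consistent with Lemmas~\ref{lem:articulatemedian} and~\ref{lem:cyclemedian} and never discards a subtree a feasible solution requires; since the centroid halves $|T^1|$, after $O(\log mn)$ steps $T^1$ is a single node or edge. A case analysis on its four possible shapes (hinge node, $G$-node, cycle node, out-of-cycle edge, or edge incident to a cycle node) then reduces the feasibility test to one of Lemmas~\ref{lem:dec-one-cycle}, \ref{lem:dec-one-edge}, \ref{lem:dec-two-cycle}, or to the one-center algorithm of~\cite{ref:HuCo23} run after zeroing the weights of the uncertain points already covered by the forced center $G_{T^1}$. When a second phase is needed, the flag set on the hinge node joining $T^1$ and $T^2$ ensures the search on $T^2$ cannot re-enter the already-resolved part of $T$.

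For the running time: each recursive step does a centroid computation, one application of a detection lemma, and the identification and relabeling of a hanging subtree, all in $O(|T^1|) = O(mn)$ time, so locating $c^1$ costs $O(mn\log mn)$ and locating $c^2$ likewise. The terminal feasibility test costs at most the maximum of $O(mn\log mn)$ for the one-center subroutine, $O(mn\log mn)$ for Lemma~\ref{lem:dec-one-edge}, $O(mn^2)$ for Lemma~\ref{lem:dec-two-cycle}, and $O(m^2n^2)$ for Lemma~\ref{lem:dec-one-cycle}; the last term dominates, giving an overall bound of $O(m^2n^2)$.

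The main obstacle I expect is not the bookkeeping but establishing the correctness of the search itself, namely that at a centroid one can, using only probability-sum and coverage information on the hanging subtrees, correctly decide whether a center is forced at that node and otherwise pick the unique hanging subtree capturing a peripheral center, without ever pruning a subtree a feasible solution needs. Getting the cycle-node case right --- cleanly separating a peripheral-center cycle, a forced hinge, and the one- versus two-branch subcases, and making all of this consistent with Lemma~\ref{lem:cyclemedian} --- is the delicate point, and it is precisely what the deferred proofs of Lemmas~\ref{lem:centerdetectingarticulate} and~\ref{lem:centerdetectingcycle} in Subsection~\ref{sec:lemmas} must supply.
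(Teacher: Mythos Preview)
Your proposal is correct and follows essentially the same approach as the paper: the lemma is stated as a summary of the preceding algorithm description, and the paper's ``proof'' is just the paragraph observing that the $O(mn\log mn)$ centroid search is dominated by the $O(m^2n^2)$ call to Lemma~\ref{lem:dec-one-cycle} in the worst case. Your write-up is in fact more explicit than the paper's, spelling out the inductive invariant and the case analysis at the centroid, whereas the paper simply narrates the algorithm and defers all substance to Lemmas~\ref{lem:centerdetectingarticulate}--\ref{lem:dec-two-cycle}.
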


\subsection{Proofs of Lemmas}\label{sec:lemmas}
We present the proofs of all Lemmas for solving the following problems raised in our decision algorithm. 

To present our proofs, some notations are introduced first. For any subset $\calP'\in\calP$ and any point $x$ on $G$, define $\Lambda(\calP',x) = \max_{P_i\in\calP'}w_i\Ed(P_i,x)$. Recall that for any subgraph $G'\in G$, $\calP^>(G')$ (resp., $\calP^=(G')$) is the subset of all uncertain points whose probability sums in $G'$ are greater than (resp., equal to) $0.5$. 

The first problem aims to decide for any given hinge node or $G$-node $u$, whether a center must be placed at vertex $G_u$, and which hanging subtrees of $u$ in $T$ must contain peripheral-center edges or cycles. The following lemma gives the result. 

\begin{lemma}\label{lem:centerdetectingarticulate}
    Given a hinge or $G$-node $u$ on $T$, we can decide in $O(mn)$ time whether a center must be placed at vertex $G_u$, and if not, which hanging subtree of $u$ contains a peripheral-center edge or cycle. 
\end{lemma}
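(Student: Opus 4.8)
The plan is to read off, from the position of the median $x_i^*$ of each $P_i$ relative to $u$, which hanging subtrees of $u$ are "forced" to contain a center, and then argue that at most two such forced subtrees can exist under a feasible $\lambda$. First I would set up notation exactly as in the statement: since $u$ is a hinge or $G$-node, $G_u$ is a single vertex, say $v=G_u$, with split subgraphs $G_1,\dots,G_s$ and hanging subgraphs $H_1,\dots,H_s$, where $H_i = G_i\cup\{v\}$. Using operation~3 on $T$ I would compute in $O(mn)$ time, for each $i$, the sets $\calP^>(G_i)$ and $\calP^=(G_i)$ together with the relevant probability sums; by Lemma~\ref{lem:articulatemedian}, every $P_j\in\calP^>(G_i)$ has its median strictly inside $H_i$ (in fact in $G_i$), every $P_j$ with a split subgraph of probability sum exactly $0.5$ has $v$ as a median, and any $P_j$ not captured by any split subgraph with sum $\ge 0.5$ also has $v$ as its median. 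So the only uncertain points that could force a center away from $v$ are those in $\bigcup_i \calP^>(G_i)$, and they are partitioned among the $G_i$'s.

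Next I would make precise what "forcing" means. Following condition~(3) in the definition of a peripheral-center edge (and its analogue for cycles), a hanging subtree $H_i$ is \emph{forced} under $\lambda$ if some $P_j$ with $x_j^*\in H_i$ cannot be covered by any point outside $H_i$ --- equivalently, by monotonicity of $\Ed(P_j,\cdot)$ along the edge/path leaving $H_i$ at $v$ (Lemma~\ref{lem:articulatemedian}), if $w_j\cdot\Ed(P_j,v) > \lambda$ for some $P_j\in\calP^>(G_i)$. This test is exactly $\Lambda(\calP^>(G_i),v) > \lambda$, which I can evaluate for all $i$ in $O(mn)$ total time after one call to operation~1 (computing all $\Ed(P_i,v)$) and a linear scan using the precomputed sets $\calP^>(G_i)$. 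If no $H_i$ is forced, then $v$ together with an appropriately chosen second center can cover everything relevant --- but more to the point, no hanging subtree must contain a peripheral-center edge or cycle ``through'' $u$, and I would check whether $v$ itself must be a center: this happens when $v$ alone is needed, i.e. when the remaining uncertain points (those not coverable outside any single $H_i$) pin $v$; concretely, if two or more distinct $H_i$ are forced then $v$ need not be a center but each forced subtree must contain one, and if exactly one $H_i$ is forced we still may or may not need $v$ --- the clean statement is: return the (at most two) forced $H_i$, and if there are zero forced subtrees but $\Lambda(\calP^=(v)\cup(\text{uncaptured }P_j), v)$-type reasoning shows $v$ is the unique median location of a point that no other single point can help, declare $v$ a mandatory center.

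The main obstacle, and the step I would spend the most care on, is the case analysis on the \emph{number} of forced hanging subtrees and reconciling it with the two-center budget. I would argue: (i) if three or more $H_i$ are forced, then three pairwise-disjoint subgraphs each need a center, so $\lambda$ is infeasible --- in that regime the lemma's output is moot, but for algorithmic cleanliness I would simply report two of them and let the downstream feasibility test reject $\lambda$; (ii) if exactly two $H_i$, say $H_a$ and $H_b$, are forced, return the two adjacent nodes of $c$ lying on the paths into $H_a$ and $H_b$ --- this is the "two nodes" output; (iii) if exactly one $H_a$ is forced, I must decide whether $v$ must \emph{also} be a center or whether a single center in $H_a$ suffices, which requires checking whether the uncertain points with median at $v$ (those in $\calP^=$ and the uncaptured ones) can all be covered from within $H_a$; if not, $v$ is mandatory and I return only $c$ (meaning ``$G_u$ is a center''), otherwise I return the single adjacent node into $H_a$; (iv) if zero are forced, $v$ is mandatory iff no single other center can absorb the $v$-median points jointly with everything else, which again is a constant number of $O(mn)$-time checks. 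Each branch uses only operations~1 and~3 plus linear scans, so the total is $O(mn)$; verifying correctness of branch (iii) --- that coverability of the $v$-median points from inside $H_a$ is both necessary and sufficient for $v$ to be optional --- is the delicate point and leans directly on the monotonicity guaranteed by Lemma~\ref{lem:articulatemedian}.
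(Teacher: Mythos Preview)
Your overall strategy---partition the uncertain points by where their medians lie relative to $G_u$ (using Lemma~\ref{lem:articulatemedian}), compute all $\Ed(P_i,G_u)$ via operation~1, and then case-split on how many hanging subtrees $H_i$ satisfy $\Lambda(\calP^>(G_i),G_u)>\lambda$---is exactly the paper's approach. Cases~(i) and~(ii) match the paper almost verbatim.

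The gap is in your cases~(iii) and~(iv). In case~(iii) you propose to decide whether $v=G_u$ is mandatory by ``checking whether the uncertain points with median at $v$ can all be covered from within $H_a$.'' This test is not well-posed: whether such points are covered depends on \emph{where} in $H_a$ the center sits, and determining whether some placement in $H_a$ covers them is not a single $O(mn)$ computation---it is essentially a one-center subproblem on $H_a$. Moreover, even if no single point in $H_a$ covers them, it does not follow that $v$ is mandatory: both centers could still lie strictly inside $H_a$. The paper sidesteps this entirely. When exactly one $H_a$ has $\Lambda>\lambda$, it simply returns the single adjacent node toward $H_a$ (the hanging subtree already contains $v$, so recursion continues there); no separate ``is $v$ mandatory'' test is needed in this branch. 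The paper declares $v$ a mandatory center only in the \emph{equality} situation $\Lambda(\calP_j,G_u)=\lambda$ for some $j$, or $\Lambda(\calP',G_u)=\lambda$ where $\calP'$ is the set of points whose median is exactly $v$---a clean threshold criterion you omit. Similarly, in case~(iv) (all $\Lambda<\lambda$) the paper observes that $v$ alone already covers every $P_i$, so $\lambda$ is immediately feasible; your proposed ``constant number of $O(mn)$-time checks'' is unnecessary. Finally, you do not treat the infeasibility case where a point with median at $v$ already has $w_j\Ed(P_j,v)>\lambda$; the paper handles this via $\Lambda(\calP',G_u)>\lambda$ and $\Lambda(\calP'',G_u)>\lambda$.
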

\begin{proof}
    For each split subgraph $G_i$ of $G_u$, let $\calP_i$ be the subset of all uncertain points whose probability sums in $G_i$ are at least $0.5$ but less than $0.5$ in every other split subgraph. Let $\calP'$ be the set of all uncertain points whose probability sums in $G_u$'s each split subgraph are less than $0.5$, and $\calP''$ denotes the set of uncertain points that occur in two such subsets $\calP^=(G_i)$. Implied by Lemma~\ref{lem:articulatemedian}, each $P_j\in\calP_i$ has its median in $H_i$ but not in any other hanging subgraph except for $G_u$; each $P_i\in\calP'$ has its median (only) at $G_u$; each $P_i\in\calP''$ has its median at $G_u$ but may also achieve its minimum expected distance at other points in the two split subgraphs where its probability sums are $0.5$. 
     
    Lemmas~\ref{lem:articulatemedian} and~\ref{lem:cyclemedian} imply the following properties. If $\Lambda(\calP', G_u)>\lambda$ or $\Lambda(\calP'', G_u)>\lambda$ then $\lambda$ is infeasible due to $\lambda^*\geq\max_{P_k\in\calP}w_k\Ed(P_k, x^*_k)$. When $\Lambda(\calP', G_u)=\lambda$ or $\Lambda(\calP_j, G_u) = \lambda$ for some $1\leq j\leq s$, a center must be placed at $G_u$ to cover $\calP'$ or $\calP_j$ and most uncertain points in $\calP- \calP_j$. Moreover, for any $1\leq i\leq s$, if $\Lambda(\calP_i, G_u)>\lambda$, center(s) must be placed in subgraph $H_i/\{h_i\}$ for covering $\calP_i$, so $H_i$ contains peripheral-center edge(s) or cycle(s); otherwise, it is not necessary to place a center in $H_i/\{h_i\}$ if $\Lambda(\calP_i, G_u)<\lambda$. 
    
    Recall that for all $1\leq i\leq s$, subsets $\calP^>(G_i)$ and $\calP^=(G_i)$ can be computed in $O(mn)$ time. Because subset $\calP_i$ is exactly the subset $\calP^>(G_i)\cup\calP^=(G_i)$ excluding all uncertain points that appear in $\calP^=(G_i)$ and other subset $\calP^=(G_j)$ with $1\leq i\neq j\leq s$. Every uncertain point has its probability sum equal to $0.5$ in at most two split subgraphs of $G_u$. Hence, subsets $\calP_i$ for all $1\leq i\leq s$ can be obtained in $O(mn)$ time. 
    
    Additionally, due to $\calP' = \calP-\cup_{i=1}^{s}\{\calP^>(G_i)\cup\calP^=(G_i)\}$, $\calP'$ can be known in $O(n)$ time. So is $\calP''$ on account of $\calP'' = \calP - \calP'-\cup_{i=1}^{s}\calP_i$. Because computing $\Ed(P_k, G_u)$ for all $1\leq k\leq n$ takes in $O(mn)$ time~\cite{ref:HuCo23}. It then follows that values $\Lambda(\calP', G_u)$, $\Lambda(\calP'', G_u)$, and $\Lambda(\calP_i, G_u)$ for each $1\leq i\leq s$ can be obtained in $O(mn)$ time. 

    Further, based on the above properties, we deal with each of the following cases accordingly. If $\Lambda(\calP', G_u)>\lambda$, $\Lambda(\calP'', G_u)>\lambda$, or more than two split subtrees have $\Lambda(\calP_i, G_u)>\lambda$, then $\lambda$ is not feasible. When $\Lambda(\calP', G_u)=\lambda$ or some split subgraph $G_j$ has $\Lambda(\calP_j, G_u)=\lambda$, a center must be placed at $G_u$ so we return node $u$. When exactly two split subgraphs have $\Lambda(\calP_i, G_u)>\lambda$, we return the two adjacent nodes of $u$ in its two hanging subtrees respectively representing the two split subgraphs. Similarly, for the case where only one split subgraph has $\Lambda(\calP_i, G_u) >\lambda$, we return $u$'s adjacent node in the corresponding split subtree. Otherwise, $\Lambda(\calP', G_u)<\lambda$, $\Lambda(\calP'', G_u)\leq\lambda$, and $\Lambda(\calP_i, G_u)<\lambda$ for each $1\leq i\leq s$. Clearly, a center at $G_u$ can cover all uncertain points in $\calP$ under $\lambda$, so $\lambda$ is feasible.
    
    It is clear to see that the total running time is $O(mn)$. Thus, the lemma is proved. \qed
\end{proof}

Let $u$ be any cycle node on $T$. The second problem to address is to decide whether $G_u$ is a peripheral-center cycle and which hanging subtree of $u$'s $H$-subtree $T^H_u$ contains a peripheral-center edge or cycle.  

\begin{lemma}\label{lem:centerdetectingcycle}
    For any cycle node $u$ on $T$, we can decide in $O(mn)$ time whether $G_u$ is a peripheral-center cycle and which hanging subtree of $T^H_u$ contains a peripheral-center edge or cycle. 
\end{lemma}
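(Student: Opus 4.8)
The plan is to mirror the proof of Lemma~\ref{lem:centerdetectingarticulate}, with the single vertex $G_u$ replaced by the cycle $G_u$ and Lemma~\ref{lem:articulatemedian} replaced by Lemma~\ref{lem:cyclemedian}. For each split subtree $G_i$ of $T^H_u$ (attached to the cycle at hinge $h_i$), let $\calP_i$ be the uncertain points whose probability sum is at least $0.5$ in $G_i$ but less than $0.5$ in every other split subtree; let $\calP'$ be those with probability sum below $0.5$ in every split subtree, and $\calP''$ those attaining $0.5$ in two split subtrees. By Lemma~\ref{lem:cyclemedian}, each $P_j\in\calP_i$ has its median in the hanging subgraph $H_i$ (at $h_i$ if the sum is exactly $0.5$), each $P_j\in\calP'$ has its median on the cycle $G_u$, and each $P_j\in\calP''$ has its median at a hinge of $G_u$. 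Exactly as in Lemma~\ref{lem:centerdetectingarticulate}, the sets $\calP_i$ ($1\le i\le s$), $\calP'$, $\calP''$ can be assembled from $\calP^>(G_i)$ and $\calP^{=}(G_i)$ in $O(mn)$ total time using the third preprocessing operation, and the expected distances $\Ed(P_j,h_i)$ used below are obtained within the same bound by the first operation.

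The one genuinely new ingredient is a monotonicity property of the cycle: if $P_j$'s probability sum in $G_i$ is at least $0.5$, then $\Ed(P_j,\cdot)$ restricted to $G_u$ is minimized at the hinge $h_i$. This holds because, as $x$ traverses $G_u$, $\Ed(P_j,x)$ is a nonnegative combination of ``tent'' distance functions in which the coefficient of the tent centered at $h_i$ — the total probability of $P_j$ inside $G_i$ — is at least $0.5$ and hence dominates all the rest combined; walking $G_u$ away from $h_i$ in either direction, the directional derivative is at least (mass of $P_j$ in $G_i$) minus (mass of $P_j$ outside $G_i$), which is nonnegative, and remains nonnegative until the point antipodal to $h_i$, so $h_i$ is the minimizer. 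Consequently, the subset of uncertain points with probability sum larger than $0.5$ in $H_i$ is coverable by some point of $G_u$ iff it is covered by $h_i$, i.e.\ iff $\Lambda(\calP^>(G_i),h_i)\le\lambda$; and if this fails, then since these points have even larger expected distances on $G\setminus H_i$, a center is forced into $H_i\setminus\{h_i\}$, so $H_i$ contains a peripheral-center edge or cycle.

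With these ingredients the decision follows the case analysis of Lemma~\ref{lem:centerdetectingarticulate}. We compute $\Lambda(\calP_i,h_i)$ for every $i$ and $\Lambda(\calP'',h)$ at the relevant hinges. If $\Lambda(\calP'',h)>\lambda$, or three or more split subtrees have $\Lambda(\calP_i,h_i)>\lambda$ (three disjoint forced centers), then $\lambda$ is infeasible. If at most one split subtree $H_i$ has $\Lambda(\calP_i,h_i)>\lambda$, then $H_i$ (or the empty subgraph, if none) is the unique exception hanging subgraph $G''$ of condition~(1) in the definition of a peripheral-center cycle; we then verify condition~(2) by checking, via the monotonicity above at the hinge $G''\cap G_u$, whether some uncertain point with median outside $G''$ fails to be coverable there, and if so we report that $G_u$ is a peripheral-center cycle and return $u$ (this subsumes the sub-case where no split subtree forces an outside center, in which a point of $G_u$ already covers all of $\calP$). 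If exactly two split subtrees $H_i,H_j$ have $\Lambda(\calP_i,h_i)>\lambda$, we return the two nodes of $H_i$ and $H_j$ adjacent to their hinges; if exactly one does and condition~(2) is not met, we return that single node, and when an exactly-$0.5$ boundary (a point of $\calP''$, or a point of $\calP^{=}(G_i)$ with $\Lambda=\lambda$ and no remaining slack) pins a center to a particular hinge we return that hinge node. All quantities are computed in $O(mn)$ time, which gives the lemma.

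I expect the main obstacle to be correctness bookkeeping rather than any single estimate. Because the ``center site'' is now an entire cycle rather than one vertex, one must (i) establish the cycle-monotonicity cleanly so that ``coverable by a point of $G_u$'' always reduces to a hinge evaluation; (ii) correctly single out the at-most-one exception hanging subgraph $G''$ and confirm condition~(2) in the definition of a peripheral-center cycle, so that the structural routing of the decision algorithm (recurse into a hanging subtree, pin a center to a hinge, or stop at $G_u$) is valid; and (iii) treat the boundary subsets $\calP''$ and the $\calP^{=}(G_i)$ uniformly, since these are exactly the cases in which a center is pinned to a specific hinge node instead of being free to move into a hanging subtree or around the cycle — the same delicacy already present, in milder form, in Lemma~\ref{lem:centerdetectingarticulate}.
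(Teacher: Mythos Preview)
Your proposal is correct and shares the paper's skeleton: compute $\Lambda(\calP_i,h_i)$ for each split subgraph of the cycle, count how many exceed $\lambda$, and branch accordingly. The one substantive difference lies in the ``exactly one split subgraph exceeds $\lambda$'' case. The paper does not verify the peripheral-center-cycle definition directly there; it simply calls Lemma~\ref{lem:centerdetectingarticulate} at the hinge $h_i$ (an articulation point) and reads off from its return value whether the cycle side, the $H_i$ side, or both must contain a center. You instead prove an explicit cycle-monotonicity fact (for $P_j$ with mass $\ge 0.5$ in $G_i$, $\Ed(P_j,\cdot)$ restricted to $G_u$ is minimized at $h_i$) so that condition~(1) reduces to a hinge evaluation, and then check condition~(2) by hand. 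Both routes are valid; the paper's subroutine call is shorter, while your monotonicity argument makes explicit a fact the paper leans on only implicitly.

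Two small remarks on your write-up. First, when you verify condition~(2) you invoke ``the monotonicity above,'' but the relevant monotonicity there is Lemma~\ref{lem:articulatemedian} applied at the hinge $h_i$ (expected distance increases as $x$ moves \emph{into} $G''$), not the cycle monotonicity you just established; the conclusion is the same, but the citation should change. Second, the set $\calP'$ you carry over by analogy with Lemma~\ref{lem:centerdetectingarticulate} plays no role in the case analysis for a cycle node and can be dropped --- the paper's proof omits it as well, since any resulting infeasibility is detected later when the cycle is actually processed.
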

\begin{proof}
    Similarly, we can compute the subset $\calP_i$ for every split subgraph $G_i$ of $T^H_u$ in $O(mn)$ time. Values $\Lambda(\calP_i, h_i)$ for all $1\leq i\leq s$ are obtained in the same time complexity since for any point $x$, $d(x,h_i)$ with $1\leq i\leq s$ can be known in $O(1)$ time after an $O(mn)$-time preprocessing work. 
    
    By Lemma~\ref{lem:articulatemedian} and Lemma~\ref{lem:cyclemedian}, we have the following observation. For any $1\leq i\leq s$, if $\Lambda(\calP_i, h_i)>\lambda$ then the hanging subgraph $H_i$ must contain a peripheral-center edge or cycle; if $\Lambda(\calP_i, h_i) = \lambda$ then a center must be placed at hinge $h_i$ to cover most uncertain points. Otherwise, $H_i$ does not contain any peripheral-center edge 
    or cycle.  
    
    Now we can handle each case as follows: If any split subgraph $G_i$ has $\Lambda(\calP_i, h_i)=\lambda$, then we return only the hinge node of $h_i$ on $T$, which can be found in $O(1)$ time. When more than two split subgraphs have $\Lambda(\calP_i, h_i)>\lambda$, $\lambda$ is not feasible. When exactly two split subgraphs of $G_u$ have $\Lambda(\calP_i, h_i)>\lambda$, which means $G_u$ is not a peripheral-center cycle, we return the two nodes adjacent to $u$'s hinge nodes respectively in the two corresponding split subtrees of $T^H_u$, which can be obtained in $O(1)$ time. 
  
    For the case where only one split subgraph, e.g., $G_i$, has $\Lambda(\calP_i, h_i)>\lambda$, we apply Lemma~\ref{lem:centerdetectingarticulate} to $h_i$ in order to decide whether $G_u$ is a peripheral-center cycle. In general, we obtain only node $u$, the node $u'$ that is adjacent to the hinge node of $h_i$ in $T^H_u$'s split subtree for $G_i$, or both of them. By Lemma~\ref{lem:centerdetectingarticulate}, the first case means that $G_u$ contains all necessary centers so we return only $u$. For the second case, all necessary centers are in $H_i$ so $u'$ is returned. In the last case, $G_u$ is a peripheral-center cycle and $H_i$ contains another peripheral-center edge or cycle, so we return both $u$ and $u'$. 
    
    Clearly, each case is handled in $O(mn)$ time in worst case. Hence, the time complexity is $O(mn)$. The lemma thus holds. \qed
\end{proof}

In the decision algorithm, we need to decide the feasibility of $\lambda$ when a peripheral-center edge, a peripheral-center cycle that contains all necessary centers, 
or two peripheral-center cycles are obtained. These cases are handled by the following lemmas. 

\begin{lemma}\label{lem:dec-one-edge}
Given a peripheral-center edge, we can determine in $O(mn\log mn)$ time whether $\lambda\geq\lambda^*$.
\end{lemma}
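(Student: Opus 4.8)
The plan is to reduce the feasibility test to a single call of the one‑center algorithm on $G$. Let $e=e(u,v)$ be the given peripheral‑center edge, write $l$ for its length, and let $G'$ be its hanging subgraph (attached at $v$, say) realizing the three defining conditions. Recall that $e$ being a peripheral‑center edge means a center, say $c_1$, must be placed on $e$ and must cover $\calP^>(G')$, while the other center $c_2$ lies in the hanging subgraph of $e$ other than $G'$. Since $e$ lies on no cycle it is a bridge, so deleting its interior splits $G$ into the component $C_u\ni u$ and the component $G'=C_v\ni v$; as every location is at a vertex, each location of $P_i$ lies in $C_u$ or in $C_v$. Parametrising a point of $e$ by its distance $t\in[0,l]$ from $u$, one verifies that $\Ed(P_i,\cdot)$ restricted to $e$ is the affine function of $t$ whose slope is (probability sum of $P_i$ in $C_u$) $-$ (probability sum of $P_i$ in $C_v$). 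Hence the coverage region $I_i=\{x\in e:w_i\Ed(P_i,x)\le\lambda\}$ is a subinterval of $e$: a suffix towards $v$ when $P_i\in\calP^>(G')$, a prefix towards $u$ when $P_i\notin\calP^>(G')\cup\calP^=(G')$, and either all of $e$ or empty when $P_i\in\calP^=(G')$. Using the $O(mn)$‑time operations on $T$ from Section~\ref{sec:pre} (computing $\Ed(P_i,u)$ and $\Ed(P_i,v)$ for all $i$, and the sets $\calP^>(G'),\calP^=(G')$ together with the needed probability sums), all the $I_i$ are obtained in $O(mn)$ time.

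I would first dispose of $\calP^=(G')$: by Lemma~\ref{lem:articulatemedian} and Lemma~\ref{lem:cyclemedian}, such a $P_i$ attains its minimum expected distance at every point of $e$, so if $w_i\Ed(P_i,x)>\lambda$ for $x\in e$ then $\lambda<\lambda^*$ (as $\lambda^*\ge w_i\Ed(P_i,x_i^*)$) and we report infeasible; otherwise every point of $e$, hence $c_1$, covers $P_i$, and $P_i$ imposes no further constraint. Next, each $P_i\in\calP^>(G')$ has a suffix coverage region $[t_i,l]$, so the positions of $e$ that cover all of $\calP^>(G')$ form a suffix $[t_R,l]$ with $t_R=\max_{P_i\in\calP^>(G')}t_i$ (and $t_R=0$ if $\calP^>(G')=\emptyset$); this is nonempty precisely because $e$ is a peripheral‑center edge (its condition (2)). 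Thus $c_1$ must be placed at some $t_1\in[t_R,l]$.

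The crux is that it suffices to fix $c_1$ at $t_1=t_R$. For $t_1\in[t_R,l]$ let $C(t_1)$ be the set of uncertain points covered by a center at $t_1$ and $U(t_1)=\calP\setminus C(t_1)$. Every point of $\calP^>(G')$ (whose region is a suffix containing $[t_R,l]$) and every point with $I_i=e$ lies in all $C(t_1)$, while a ``prefix'' point $P_i$ lies in $C(t_1)$ iff $t_1\le t_i$; hence $C(t_1)\subseteq C(t_R)$, i.e.\ $U(t_R)\subseteq U(t_1)$ for every such $t_1$. Consequently $\lambda\ge\lambda^*$ iff $U(t_R)$ can be covered by a single point of $G$: if $\lambda$ is feasible then, by the peripheral‑center property, some $c_1\in e$ covering $\calP^>(G')$ — hence some $c_1$ at $t_1\in[t_R,l]$ — together with some point $c_2$ covers $\calP$, so $c_2$ covers $U(t_1)\supseteq U(t_R)$; conversely, if a single point $c_2$ covers $U(t_R)$ then $c_1$ at $t_R$ together with $c_2$ covers $\calP$. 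We test this by setting the weight of every point of $C(t_R)$ to zero and running the one‑center algorithm of~\cite{ref:HuCo23} on $G$ in $O(mn\log mn)$ time: $\lambda\ge\lambda^*$ iff its objective value is at most $\lambda$. With the $O(mn)$ preprocessing this gives the claimed $O(mn\log mn)$ bound.

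I expect the main obstacle to be justifying that it suffices to fix the peripheral center at $t_R$. This rests on the monotonicity $C(t_1)\subseteq C(t_R)$, which in turn needs each coverage region $I_i$ to be a \emph{single} subinterval of $e$ — a consequence of $e$ being a bridge and of the vertex‑constrained reduction, so that $\Ed(P_i,\cdot)$ is affine on $e$ — together with the correct classification of the uncertain points by their probability mass on the two sides of $e$, where the boundary case of mass exactly $0.5$ (the set $\calP^=(G')$) must be handled separately as above.
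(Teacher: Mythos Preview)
Your approach is essentially the paper's: place the peripheral center on $e$ as far as possible from the $G'$ side while still covering the uncertain points with median in $G'$, then call the one-center algorithm of~\cite{ref:HuCo23} on the uncovered remainder. The one substantive difference is that the paper does \emph{not} assume the distinguished hanging subgraph $G'$ is known; instead it carries out your computation for \emph{both} hanging subgraphs $H_1,H_2$ of $e$ (with $\calP_k=\calP^>(H_k)\cup\calP^=(H_k)$), runs the one-center routine twice, and declares $\lambda$ feasible if either call succeeds. This costs only a constant factor and buys robustness to not having identified $G'$. Conversely, your write-up is more explicit about the key monotonicity $C(t_1)\subseteq C(t_R)$ for $t_1\in[t_R,l]$, which the paper leaves implicit. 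One small point to tighten in your forward direction: you invoke ``the peripheral-center property'' to get a feasible pair with $c_1\in e$ already covering $\calP^>(G')$, but a priori a feasible $q_1$ covering the tightest point of $\calP^>(G')$ might lie in $G'\setminus e$ rather than on $e$; the conclusion still holds because for any $q_1\in G'$ and any $P_j$ with probability mass $\le 0.5$ in $G'$ one has $\Ed(P_j,q_1)\ge \Ed(P_j,v)\ge \Ed(P_j,t_R)$, so $C(q_1)\subseteq C(t_R)$ in that case as well.
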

\begin{proof}
Let $e$ be the given peripheral-center (out-of-cycle) edge. 
Assume that it is incident to vertices $v_1$ and $v_2$ on $T$, which are non-cycle nodes. $e$ has exactly two hanging subgraphs $H_1, H_2$. Suppose $v_1$ is in $H_1$ and $v_2$ is in $H_2$. By the definition, a center must be placed on $e$ to cover all uncertain points that have medians all in $H_1$ or $H_2$. 

Recall that subset $\calP^>(H_1)\cup\calP^=(H_1)$ (resp., $\calP^>(H_2)\cup\calP^=(H_2)$) is the subset of all uncertain points whose medians are in $H_1$ (resp., $H_2$), which can be obtained in $O(mn)$ time. For convenience, let $\calP_1 = \calP^>(H_1)\cup\calP^=(H_1)$ and $\calP_2=\calP^>(H_2)\cup\calP^=(H_2)$. 

To place necessary center(s) on $e$, we first find the point on $e$ furthest to $v_1$ that covers subset $\calP_1$. If it exists, then we set the weights of all uncertain points covered by it as zero and subsequently solve the one-center problem w.r.t. $\calP$ on $G$. 
Also, we compute such a point on $e$ covering subset $\calP_2$ and if it exists, solve the one-center problem on $G$ w.r.t. the subset of all uncovered uncertain points. 
As long as one obtained one-center objective value is no greater than $\lambda$, $\lambda$ is feasible. Otherwise, $\lambda<\lambda^*$.  

It remains to discuss how to compute such a point on $e$ to cover $\calP_1$ (resp., $\calP_2$). 
Let $x$ be any point on $e$. 
For each $P_i$ in $\calP_1$ (resp., $\calP_2$), 
we compute the furthest point on $e$ to $v_1$ (resp., $v_2$) that covers $P_i$ by solving $w_i\Ed(P_i,x)\leq\lambda$ for $x\in e$. If all such points exist, the necessary center on $e$ must be placed at the one of them that is closest to $v_1$ (resp., $v_2$). 

Let $F_i$ denote the probability sum of each $P_i\in\calP$ in $H_1$. Clearly, for any $x\in e$, $\Ed(P_i,x) = \Ed(P_i,v_1) + (2F_i-1)\cdot d(x,v_1)$. Recall that all $\Ed(P_i,v_1)$ can be computed in $O(mn)$ time. Additionally, values $F_i$ for all $1\leq i\leq n$ can be obtained in linear time by traversing the hanging subtree of $e$ representing $H_1$. 
Hence, solving $w_i\Ed(P_i,x)\leq\lambda$ for each $P_i\in\calP$ takes a constant time. 
As a result, the furthest  point on $e$ to $v_1$ (resp., $v_2$) that covers $\calP_1$ (resp., $\calP_2$) can be computed in $O(mn)$ time. 

Since solving the one-center problem takes $O(mn\log mn)$ time~\cite{ref:HuCo23}, the time complexity is $O(mn\log mn)$. Thus, the lemma holds. \qed
\end{proof}

\begin{lemma}\label{lem:dec-one-cycle}
Given a peripheral-center cycle node, we can determine in $O(m^2n^2)$ 
time whether $\lambda\geq\lambda^*$. 
\end{lemma}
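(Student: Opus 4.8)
The plan is to reduce the problem to a combinatorial question about circular arcs on the cycle $C:=G_u$ that the given peripheral-center cycle node $u$ represents. Since $C$ contains all necessary centers, I must decide whether there are two points $x_1,x_2$ on $C$ with $w_i\cdot\min\{\Ed(P_i,x_1),\Ed(P_i,x_2)\}\le\lambda$ for every $P_i\in\calP$; equivalently, whether every $P_i$ is \emph{covered} (under $\lambda$) by $x_1$ or by $x_2$, where $x$ covers $P_i$ iff $w_i\Ed(P_i,x)\le\lambda$. The first step is to pin down the structure of $g_i(x):=w_i\Ed(P_i,x)$ for $x$ ranging over $C$. Parametrizing $C$ by arc length, each term $d(p_{ij},x)$ is a ``tent'' function whose only minimum is at the foot of $p_{ij}$ on $C$ and whose only maximum is at the antipodal point, so $g_i$ is piecewise linear with $O(m)$ breakpoints (the feet and antipodes of the $m$ locations of $P_i$); note that $g_i$ need not be unimodal on $C$. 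Using the $O(mn)$-time preprocessing of Section~\ref{sec:pre} (all $\Ed(P_i,v)$ at one vertex $v$ of $C$, and $O(1)$-time distance queries) together with a single global sort of the $O(mn)$ breakpoints of all the $g_i$'s, I can compute in $O(mn)$ further time, for every $P_i$, its \emph{uncovered set} $U_i:=\{x\in C: g_i(x)>\lambda\}$, a union of $O(m)$ open arcs, and its \emph{coverage set} $R_i:=C\setminus U_i$, a union of $O(m)$ closed arcs. If some $U_i=C$ then $P_i$ is coverable by no point of $C$ and $\lambda$ is infeasible; otherwise continue.

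The key reduction is that $(x_1,x_2)$ covers $\calP$ iff $x_2\in\bigcap_{i:\,x_1\notin R_i}R_i$, so $\lambda$ is feasible iff there is a point $x_1\in C$ with $\bigcup_{i:\,x_1\notin R_i}U_i\ne C$. The set $S(x):=\{i: x\notin R_i\}$ changes only when $x$ crosses one of the $O(mn)$ endpoints of the arcs of the $R_i$'s; these endpoints split $C$ into $O(mn)$ elementary arcs on each of which $S$ is constant, and since $R_i$ is closed, $S$ at either endpoint of an elementary arc is a subset of its value in the interior. Hence, if some valid pair exists, pushing $x_1$ to an endpoint of its elementary arc can only shrink $S(x_1)$ and thus only enlarges $\bigcap_{i\in S(x_1)}R_i$, so $(x_1,x_2)$ remains valid; consequently it suffices to test $x_1$ only at the $O(mn)$ arc endpoints. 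The algorithm therefore enumerates these $O(mn)$ candidate points $x_1$ and, for each, (i) determines $S(x_1)$, and (ii) decides whether $\bigcup_{i\in S(x_1)}U_i$ covers $C$. Advancing from one candidate to the next toggles membership of exactly one index in $S$, so $S(x_1)$ is maintained in $O(1)$ per step after an initial $O(mn)$ (binary searches in the $R_i$). For (ii), keep all endpoints of all $U_i$-arcs in one global cyclic sorted list with back-pointers to their index $i$; a single sweep over this list that activates only those arcs whose index lies in $S(x_1)$ and maintains a running coverage counter—its value at the sweep's start being obtained by checking which active arcs wrap around the cut—detects in $O(mn)$ time whether some point of $C$ is left uncovered. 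If for some candidate the union misses $C$, any uncovered point is a valid $x_2$ and $\lambda$ is feasible; if no candidate works, $\lambda$ is infeasible. (The sub-case where a single center on $C$ suffices is subsumed: then $S(x_1)=\emptyset$ for the appropriate $x_1$.)

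For the running time: the preprocessing and the construction of all $U_i$ cost $O(mn)$ after one $O(mn\log mn)$ sort; the main loop is $O(mn)$ candidates, each handled in $O(mn)$, for $O(m^2n^2)$ total; the overall bound is $O(m^2n^2)$, which absorbs the $O(mn\log mn)$ sort. This proves the lemma.

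The step I expect to be the main obstacle is avoiding the naive extra logarithmic or linear factors in two places: arguing soundly that only $O(mn)$ candidate positions for $x_1$ must be examined, including a careful treatment of boundary ties (handled above by using closed $R_i$ and half-open elementary arcs, with degenerate single-point coverage arcs perturbed away); and performing the per-candidate ``does a union of $O(mn)$ circular arcs cover the circle?'' test in $O(mn)$ rather than $O(mn\log mn)$ time, which is exactly what forces the single global presort and the sweep-with-wrap-around bookkeeping. Underlying everything is the structural observation that $g_i$ restricted to a cycle is piecewise linear with only $O(m)$ breakpoints, so the total complexity of all the coverage/uncovered arc sets stays $O(mn)$.
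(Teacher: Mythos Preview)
Your proposal is correct and follows essentially the same approach as the paper: both reduce the question to a 2-piercing problem for the $n$ families of $O(m)$ coverage arcs $\{x\in C:\ w_i\Ed(P_i,x)\le\lambda\}$, argue that it suffices to try the first center only at the $O(mn)$ arc endpoints, and then spend an $O(mn)$ sweep per candidate to test whether a second point on $C$ hits all remaining families, for $O(m^2n^2)$ total. The only cosmetic differences are that the paper phrases the per-candidate test as ``find an endpoint hitting all unhit sets'' while you phrase it as ``does $\bigcup_{i\in S(x_1)}U_i$ miss some point of $C$'', and that you supply an explicit closedness argument for the endpoint restriction and an explicit tent-function argument for the $O(m)$ breakpoints, whereas the paper simply cites~\cite{ref:HuCo23} for the piecewise-linear structure and asserts the endpoint reduction.
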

\begin{proof}
Let $u$ be the given node on $T$. The goal is to decide if two centers can be placed on cycle 
$G_u$ to cover $\calP$ under $\lambda$. 

Suppose $v_1, v_2, \cdots, v_s$ are the clockwise enumeration of all 
vertices on $G_u$. Let $x$ be any point on $G_u$. 
For each $P_i\in\calP$, consider function $y = w_i\Ed(P_i,x)$ 
in the $x,y$-coordinate plane by setting vertices $v_1, v_2, \cdots, v_s, v_1$ on the $x$-axis in order. As analyzed in~\cite{ref:HuCo23}, $\Ed(P_i,x)$ 
is a piecewise linear function of complexity $O(m)$. Hence, solving $w_i\Ed(P_i,x)\leq \lambda$ generates a set $S^i$ 
of $O(m)$ disjoint line segments $s^i_1, s^i_2, \cdots, s^i_z$ on the $x$-axis, that is, $O(m)$ disjoint arcs on $G_u$. 

For all $1\leq i\leq n$, functions $y = w_i\Ed(P_i,x)$ in $x\in G_u$ can be determined in $O(mn\log mn)$ time~\cite{ref:HuCo23}. Because each $y = \Ed(P_i,x)$ is piecewise linear and of complexity $O(m)$. It follows that all $n$ (ordered) sets $S^i$, generated by solving $w_i\Ed(P_i,x)\leq \lambda$, can be built in $O(mn\log mn)$ time. 

To decide the feasibility of $\lambda$, it is equivalent to determining whether there exist two endpoints among all of line segments in the $n$ sets $S^i$ so that a line segment of every set $S^i$ contains one of them, that is, each set $S^i$ is hit by one of them. The sweeping technique can be adapted to address this problem as follows. 

Sort all endpoints of sets $S^i$ from left to right on $x$-axis with an additional requirement where left endpoints precede right endpoints in the order if they are of same $x$-coordinate. Let $X = \{x_1, \cdots,  x_M\}$ denote this order where $M = O(mn)$. $X$ can be obtained in $O(mn\log  mn)$ time. 

Let $l$ be a vertical line in the $x,y$-coordinate plane, which will be used to sweep all line segments on $x$-axis from $x_1$ to $x_M$. Create an auxiliary array $F[1\cdots n]$ so that if set $S^i$ is hit by the endpoint that  $l$ meets, then $F[i]$ is set as one, and otherwise, $F[i]$ is set as zero. All entries of $F$ are initialized as zero. Additionally, we set a counter $\alpha$ to record the number of sets $S^i$ that $l$ hits. 

The algorithm consists of at most $M$ iterations where the $i$-th iteration determines whether there is an endpoint $x_j$ with $1\leq i\leq j\leq M$ so that endpoints $x_i, x_j$ hit all sets $S^i$. Specifically, in the $i$-th iteration, we begin by setting $\alpha = 0$ and reset every entry of $F$ as zero. Sweep $l$ to scan $X$ from $x_1$ to $x_M$ until $l$ meets a right endpoint whose $x$-coordinate equals to $x_i$'s. 
For each encountered endpoint of $x$-coordinate smaller than $x_i$'s, supposing it belongs to set $S^k$, if it is a left endpoint, then $F[k]$ is set as one and $\alpha$ is incremented; if it is its right endpoint then we set $F[k]$ as zero and decrement $\alpha$ (since all segments in each set are disjoint). 
In $O(mn)$ time, $l$ reaches the first right endpoint of $x$-coordinate equal to $x_i$. At this moment, the number of sets that $x_i$ hits equals to $\alpha$, and enumerating indices of all `one' entries in $F$ generates this subset. 

Further, we check if $(n-\alpha)$ equals to zero. If yes then $x_i$ hits all sets so $\lambda$ is feasible. Otherwise, set $\beta = n-\alpha$ and reset $\alpha$ as zero. Proceed with sweeping the remaining endpoints to find an endpoint that hits all remaining segment sets. 
For each endpoint encountered, supposing it belongs to $S^j$, 
if it is its left endpoint and $F[j] = 0$, we increment $\alpha$ and then  return $\lambda\geq\lambda^*$ when $\alpha\geq\beta$. 
Otherwise, it is its right endpoint and we decrement $\alpha$ if $F[j] = 0$. 

It is clear to see that every iteration takes $O(mn)$ time to find the existence of such an endpoint that hit all sets with a given endpoint. 
At the end, we return $\lambda<\lambda^*$ if no two endpoints in $X$ can hit all sets $S^i$. Since the time of the $O(M)$ iterations 
dominates, the time complexity is $O(m^2n^2)$.\qed
\end{proof}

\begin{lemma}\label{lem:dec-two-cycle}
Given two peripheral-center cycles, we can determine in $O(mn^2)$ time whether $\lambda\geq\lambda^*$.  
\end{lemma}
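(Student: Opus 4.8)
The plan is to exploit that $c^1$ and $c^2$ are \emph{distinct} peripheral-center cycles (the case $c^1=c^2$ is what Lemma~\ref{lem:dec-one-cycle} handles), so that every solution feasible for $\lambda$ must place exactly one center $x_1$ on $G_{c^1}$ and exactly one center $x_2$ on $G_{c^2}$. Let $u_1,u_2$ be the cycle nodes of $T$ for $c^1,c^2$, and let $h^1$ (resp.\ $h^2$) be the hinge of $c^1$ (resp.\ $c^2$) on the $u_1$--$u_2$ path in $T$; the split subgraph $G''$ of $G_{c^1}$ singled out in the definition of a peripheral-center cycle is then the one attached at $h^1$, and it contains all of $G_{c^2}$ (symmetrically for $c^2$). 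First I would apply the third $O(mn)$-time operation of Section~\ref{sec:pre} to the $H$-subtrees of $u_1$ and $u_2$ to get, for every $P_i$, its probability sum in the split subgraph of $G_{c^1}$ attached at $h^1$ and in the split subgraph of $G_{c^2}$ attached at $h^2$. By Lemmas~\ref{lem:articulatemedian} and~\ref{lem:cyclemedian}, if this sum is at most $0.5$ at $h^1$ then the median of $P_i$ lies in $G_{c^1}$ or on its ``far'' side, and then condition~(2) for a peripheral-center cycle forbids covering $P_i$ under $\lambda$ from any point of $G''\setminus\{h^1\}$, in particular from any point of $G_{c^2}$; such $P_i$ must therefore be covered by $x_1$. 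Collecting these into $\calP_1$, the symmetric set into $\calP_2$, and setting $\calP_0=\calP\setminus(\calP_1\cup\calP_2)$, every $P_i\in\calP_0$ has more than half its mass at $h^1$ or in the split subgraph attached there, and likewise at or beyond $h^2$, and such a $P_i$ may be covered by either center.

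The crux is a structural claim about $\calP_0$. For $P_i\in\calP_0$, each location of $P_i$ at or beyond $h^1$ reaches any $x_1\in G_{c^1}$ along a path through $h^1$, so its contribution to $\Ed(P_i,x_1)$, as $x_1$ runs around $G_{c^1}$, is a constant plus (its mass at or beyond $h^1$) times $d(h^1,x_1)$, a term whose slope in the arc parameter has magnitude exactly that mass; since this mass exceeds $0.5$ and the remaining mass of $P_i$ (below $0.5$, sitting at or beyond the other vertices of $G_{c^1}$) contributes terms of slope magnitude at most $1$ in total less than $0.5$, the overall slope of $w_i\Ed(P_i,x_1)$ has the same sign everywhere as the $h^1$-term, so this function is unimodal on the cycle with minimum at $h^1$. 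Hence $A^1_i=\{x_1\in G_{c^1}:w_i\Ed(P_i,x_1)\le\lambda\}$ is a single arc containing $h^1$ (possibly empty, possibly all of $G_{c^1}$), recorded by its two endpoints; symmetrically $A^2_i\subseteq G_{c^2}$ is a single arc around $h^2$. Consequently, for any $J\subseteq\calP_0$, $\bigcap_{i\in J}A^2_i$ is again a single arc around $h^2$ whose two one-sided extents are the minima over $i\in J$ of those of the $A^2_i$.

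With this in hand, using the $O(mn)$-complexity distance functions on each cycle computable as in the proof of Lemma~\ref{lem:dec-one-cycle}, I would form $D^1=\bigcap_{P_i\in\calP_1}\{x_1\in G_{c^1}:w_i\Ed(P_i,x_1)\le\lambda\}$ and the analogous $D^2\subseteq G_{c^2}$, each a union of $O(mn)$ arcs; if either is empty, $\lambda$ is infeasible. Since $x_2$ covers $\calP_2$ iff $x_2\in D^2$ and covers $P_i\in\calP_0$ iff $x_2\in A^2_i$, feasibility is exactly: there exists $x_1\in D^1$ with $D^2\cap\bigcap\{A^2_i:i\in\calP_0,\ x_1\notin A^1_i\}\neq\emptyset$. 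I would precompute the arc-length distance from $h^2$ to the nearest point of $D^2$ in each of the two directions; then, for a fixed $x_1$ with $J(x_1)=\{i\in\calP_0:x_1\notin A^1_i\}$, the intersection above is the arc around $h^2$ with one-sided extents equal to the minima over $i\in J(x_1)$ of the recorded $A^2_i$-extents, and $D^2$ meets it iff one of the two precomputed distances is within the matching extent --- an $O(|\calP_0|)=O(n)$ test once $J(x_1)$ is known. Finally I would sweep $x_1$ once around $G_{c^1}$: the events are the $O(mn)$ endpoints of arcs of $D^1$ and the $O(n)$ endpoints of the arcs $A^1_i$ with $i\in\calP_0$; between consecutive events both ``$x_1\in D^1$'' and $J(x_1)$ are constant, and at each of the $O(mn)$ intervals with $x_1\in D^1$ I run the $O(n)$ test, declaring $\lambda\ge\lambda^*$ iff some interval passes. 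Sorting costs $O(mn\log mn)$ and the sweep $O(mn)\cdot O(n)=O(mn^2)$, which dominates, and all preprocessing fits within $O(mn^2)$.

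The main obstacle is exactly this coupling of the two cycles: without extra structure, the admissible pairs $(x_1,x_2)$ form a two-dimensional region on the torus $G_{c^1}\times G_{c^2}$ cut out by $|\calP_0|$ products of arc-sets, and checking it directly is $\Theta(m^2n^2)$. The unimodality of the $\calP_0$-distance functions about $h^1$ and $h^2$ --- equivalently, that each $A^1_i$ and each $A^2_i$ is a single arc --- is what collapses the whole test to the one-dimensional sweep above with an essentially $O(1)$ per-interval check against a precomputed profile of $D^2$; establishing it is where the real work lies. A few degenerate cases (a median lying exactly at $h^1$ or $h^2$, so a side-sum equals $0.5$; an arc wrapping entirely around a cycle; an empty $D^1$ or $D^2$) would need separate but routine handling.\qed
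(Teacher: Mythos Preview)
Your partition into $\calP_1,\calP_2,\calP_0$ rests on a misreading of condition~(2) in the definition of a peripheral-center cycle: that condition is existential, not universal. It only asserts that \emph{some} uncertain point with median in $c^1\cup(G\setminus G'')$ is uncoverable from $G''\setminus\{h^1\}$; it does not say this for every such point. Concretely, a point $P_i\in\calP_1$ with $w_i\Ed(P_i,h^2)\le\lambda$ can perfectly well be covered by an $x_2$ near $h^2$ on $c^2$, and in a feasible pair $(x_1,x_2)$ the center $x_1$ need not cover $P_i$ at all. Since $P_i$'s covering region on $c^1$ is in general a union of up to $m$ disjoint arcs (your unimodality argument applies only to $\calP_0$, not to $\calP_1$), requiring $x_1\in D^1$ can exclude every feasible $x_1$, and your sweep may then wrongly declare $\lambda$ infeasible. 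The paper's proof confronts exactly this issue: it isolates the subset $\calP''\subseteq\calP'$ (your $\calP_1$) of points that \emph{are} coverable at $h^2$ and treats them as flexible between the two centers.

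Your unimodality observation is correct and is in fact the same structural fact the paper exploits, but applied in the other direction. For each $P_i\in\calP''$ the mass on the $c^1$-side of $h^2$ exceeds $0.5$, so its covering region on $c^2$ is a single arc through $h^2$; the $O(n)$ endpoints of these arcs cut a neighborhood of $h^2$ into $O(n)$ slabs $S_j$, and for each slab the paper takes as candidate $x_1$'s the two points on $c^1$ closest to $h^1$ (clockwise and counterclockwise) that still cover $\calP'\setminus\calP''(S_j)$ --- closeness to $h^1$ being optimal for picking up $\calP\setminus\calP'$, again by the unimodality you proved. This gives only $O(n)$ candidate $x_1$'s, each tested against all of $\calP$ on $c^2$ in $O(mn)$ time. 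If you try to repair your symmetric scheme by moving the coverable-at-$h^2$ part of $\calP_1$ (and symmetrically for $\calP_2$) into $\calP_0$, those points no longer have single-arc covering regions on \emph{both} cycles, and your $O(n)$-per-interval test collapses.
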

\begin{proof}
Let $u_1$ and $u_2$ be the two given peripheral-center cycle nodes on $T$. Cycle $G_{u_2}$ (resp., $G_{u_1}$) belongs to a hanging subgraph of $G_{u_1}$ (resp., $G_{u_2}$) and it connects $G_{u_1}$ (resp., $G_{u_2}$) by the hinge $h_1$ (resp., $h_2$) on $G_{u_1}$ (resp., $G_{u_2}$). Denote by $G'$ the subgraph generated by removing from $G$ the split subgraph of $G_{u_1}$ intersecting $G_{u_2}$. 

By the definition, at least one center must be placed on $G_{u_1}$ to cover all uncertain points whose medians are in $G'$ but that cannot be covered by any points out of $G_{u_1}$. Recall that subset $\calP' = \calP^>(G')\cup\calP^=(G')$ is the subset of all uncertain points whose medians are in $G'$, i.e., whose probability sums on $G'$ are no less than $0.5$.  
Denote by $\calP''$ the subset of $\calP'$ where all uncertain points can be covered by $h_2\in G_{u_2}$. Clearly, subsets $\calP'$ and $\calP''$ can be obtained in $O(mn)$ time. 

For each $P_i\in\calP''$, solving $w_i\Ed(P_i,x)\leq\lambda$ for $x\in G_{u_2}$ generates an arc on $G_{u_2}$ that contains $h_2$ or only point $h_2$. Consider $h_2$ as an infinitesimal arc on $G_{u_2}$. Let $S = \{S_1, \cdots, S_t\}$ be the set of all (disjoint) arcs defined by all endpoints of these obtained arcs on $G_{u_2}$ for $\calP''$. Set $S$ contains the infinitesimal arc $h_2$ if some uncertain point in $\calP''$ is covered by only $h_2\in G_{u_2}$. Denote by $\calP''(S_i)$ the subset of all uncertain points in $\calP''$ whose arcs contain $S_i$ entirely.  

Suppose that there exist two points respectively on $G_{u_1}$ and $G_{u_2}$ that cover $\calP$ under $\lambda$. 
Depending on which of subsets $\calP''(S_i)$ is covered by the center on $G_{u_2}$, for each $1\leq i\leq t$, we have only two options for placing the center on $G_{u_1}$ in order to place least centers on $G$ to cover $\calP$: 
They are the two points on $G_{u_1}$ respectively clockwise and counterclockwise closest to $h_1$ and that each covers subset $\calP'-\calP''(S_i)$. Denote by $X$ the set of all these $O(n)$ optional points for placing the center on $G_{u_1}$. 

Clearly, for any point in $X$, if a point on $G_{u_2}$ can cover all remaining uncertain points that cannot be covered by it, then our assumption is true and so $\lambda$ is feasible. Otherwise, our assumption is false. Additionally, some uncertain points in $\calP'$ cannot be covered by any point out of $G'/\{h_1\}$ whereas some in $\calP-\calP'$ cannot be covered by any point in $G'$. Hence, two centers are not enough to cover $\calP$ under $\lambda$, which means $\lambda<\lambda^*$. 

All $n$ functions $y = w_i\Ed(P_i,x)$ for $x\in G_{u_1}$ (resp., $x\in G_{u_2}$) can be determined in $O(mn)$ time. It follows that computing the set $S$ takes $O(mn)$ time and all subsets $\calP''(S_i)$ are obtained in $O(n^2)$ time. 
Although all subsets $\calP''(S_i)$ might be implicitly formed in $O(n)$ time since $h_2$ is in every arc that leads $S$, 
the following computation leads the time complexity at least $O(n^2)$. 

We compute for each $1\leq i\leq t$ the two above-mentioned 
optional points to place a  center on $G_{u_1}$, i.e., the set $X$. For each $P_j$ in subset $\calP'$, we solve in $O(m)$ time $w_j\Ed(P_j,x)\leq\lambda$ for $x\in G_{u_1}$, which generates a set of at most $m$ disjoint arcs on $G_{u_1}$ in order. Sort all endpoints in both clockwise and counterclockwise directions starting from $h_1$. Next, for each $1\leq i\leq t$, we scan all endpoints to find the clockwise and counterclockwise endpoints to $h_1$ that hit each of the obtained arc sets for all uncertain points in subset $\calP' - \calP''(S_i)$. Clearly, this can be carried out in $O(mn^2)$ time by the sweeping technique as in Lemma~\ref{lem:dec-one-cycle}. 
Hence, the set $X$ of all $O(n)$ optional points for placing a center on $G_{u_1}$ can be obtained in $O(mn^2 + mn\log mn)$ time. 

It should be noticed that if such two optional points for every subset $\calP''(S_i)$ do not exist then $\lambda$ is not feasible. 

Last, we decide for each point on $G_{u_1}$ in $X$, whether there exists a point on $G_{u_2}$ that covers all uncertain points that it cannot cover. For every point of $X$, find all uncertain points that it covers under $\lambda$ in $O(mn^2)$ time. Solve $w_i\Ed(P_i,x)\leq\lambda$ for $x\in G_{u_2}$ for each $P_i\in\calP$ and then sort the obtained endpoints. Subsequently, for each point of $X$, we decide the existence of a point on $G_{u_2}$ to cover the uncovered uncertain point by sweeping all obtained arcs on $G_{u_2}$. If such a point on $G_{u_2}$ exists for some point in $X$ then $\lambda$ is feasible and so we immediately return. Otherwise, none of them have such a point on $G_{u_2}$ and thereby $\lambda$ is not feasible. The running time is $O(mn^2+mn\log mn)$. 

The total running time is dominated by the last two steps 
each of which takes $O(mn^2 +mn\log mn)$ time. Thus, the lemma is proved. \qed
\end{proof}

\section{Computing $\lambda^*$}\label{sec:alg}
In this section, we present our algorithm that computes $\lambda^*$. By the definitions, the cycle or out-of-cycle edge on $G$ that contains centers $q^*_1$ or $q^*_2$ is a peripheral-center edge or cycle on $G$ under $\lambda^*$. 
We call the cycle or the out-of-cycle edge containing $q^*_1$ or $q^*_2$ a \textit{critical} node or edge. Denote by $c^*_1$ and $c^*_2$ the critical edges or cycles.  

Similar to the decision algorithm, we perform two ``binary searches'' on $T$ to respectively find $c^*_1$ and $c^*_2$ with the support of the following Lemmas~\ref{lem:CriticalNonCycleNode} and~\ref{lem:CriticalCycleNode}. 

In short, to find $c^*_1$, we recursively compute the centroid $c$ of $T$, apply Lemmas~\ref{lem:CriticalNonCycleNode} or~\ref{lem:CriticalCycleNode} to it accordingly, and prune all its split subtrees that are not relevant to $c^*_1$. 
After $O(\log mn)$ recursive steps, in general, we obtain an out-of-cycle edge, an edge incident to a cycle node, or a cycle node. For the middle case, that cycle node is exactly the critical node sought. After then, $c^*_2$ can be found in a similar way by recursively performing the searching on the hanging subtree of the centroid containing $c^*_2$. Once $c^*_1$ and $c^*_2$ are found, Lemma~\ref{lem:basecase} are applied to compute $\lambda^*$ with assistance of our decision algorithm. 

\begin{lemma}\label{lem:CriticalNonCycleNode}
    Given a hinge node or $G$-node $u$ on $T$, we can decide in $O(m^2n^2)$ time which hanging subtrees of $u$ contain $c^*_1$ and $c^*_2$. 
\end{lemma}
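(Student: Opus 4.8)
The plan is to establish this as the optimization counterpart of Lemma~\ref{lem:centerdetectingarticulate}, where the test ``$\Lambda(\cdot,G_u)$ versus $\lambda$'' is replaced by a test against the \emph{unknown} optimum $\lambda^*$ that is resolved by a constant number of calls to the decision algorithm of Lemma~\ref{lem:decision}. First I would recompute, exactly as in the proof of Lemma~\ref{lem:centerdetectingarticulate}, the subsets $\calP_i$ (uncertain points whose median lies in the hanging subgraph $H_i$ and in no other hanging subgraph of $G_u$), $\calP'$ (median only at $G_u$) and $\calP''$, together with the values $\Lambda(\calP_i,G_u)$ for all $i$ and $\Lambda(\calP',G_u),\Lambda(\calP'',G_u)$; by the operations listed in Section~\ref{sec:pre} this takes $O(mn)$ time. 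Lemmas~\ref{lem:articulatemedian} and~\ref{lem:cyclemedian} supply the structural fact that drives the proof: a hanging subtree $H_i$ must contain a critical edge or cycle (equivalently, a peripheral-center edge or cycle under $\lambda^*$) if and only if $\Lambda(\calP_i,G_u)>\lambda^*$, whereas $\Lambda(\calP',G_u)\le\lambda^*$ and $\Lambda(\calP'',G_u)\le\lambda^*$ always hold because the placement $q_1=q_2=G_u$ already achieves value at least $\Lambda(\calP',G_u)$ on $\calP'$ and at least $\Lambda(\calP'',G_u)$ on $\calP''$. Since $\lambda^*$ is attained by some two-center solution and the expected distance of a genuine-majority-median point strictly increases once one leaves its hanging subgraph, at most two indices $i$ can have $\Lambda(\calP_i,G_u)>\lambda^*$. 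Therefore the entire output is determined by how the two largest values $a_1\ge a_2$ among $\{\Lambda(\calP_i,G_u)\}_i$ compare with $\lambda^*$.

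Next I would carry out these two comparisons with only $O(1)$ invocations of the decision algorithm. A call at a value $v$ reports feasibility, i.e.\ whether $v\ge\lambda^*$; to get the \emph{strict} test $v>\lambda^*$ that the boundary forces on us, I would run the decision algorithm at the symbolically perturbed value $v-\epsilon$. That algorithm only ever compares $\lambda$ against fixed algebraic quantities, so running it with $\lambda=v-\epsilon$ is a well-defined computation of the same $O(m^2n^2)$ cost, and it answers ``feasible'' exactly when $v>\lambda^*$. Running this at $v=a_1$ and at $v=a_2$ uses a constant number of $O(m^2n^2)$-time calls, so together with the $O(mn)$ preprocessing the total time is $O(m^2n^2)$ --- crucially avoiding the extra logarithmic factor that a binary search over all $O(n)$ candidate values would incur.

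Finally I would read off the answer. If $a_1>\lambda^*$ and $a_2>\lambda^*$, the two hanging subtrees of $u$ attaining $a_1$ and $a_2$ are precisely the ones holding $c^*_1$ and $c^*_2$, and we return the two nodes of $u$ adjacent to them. If only $a_1>\lambda^*$, the hanging subtree $H_{i_1}$ attaining $a_1$ holds one critical node, say $c^*_1$; since $\Lambda(\calP\setminus\calP_{i_1},G_u)=\max\{a_2,\Lambda(\calP',G_u),\Lambda(\calP'',G_u)\}\le\lambda^*$, a center at $G_u$ covers $\calP\setminus\calP_{i_1}$, so there is an optimal solution whose second center sits at $G_u$ and we return the node adjacent to $H_{i_1}$ together with $u$ itself. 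If neither value exceeds $\lambda^*$, then $\Lambda(\calP,G_u)=\max\{a_1,\Lambda(\calP',G_u),\Lambda(\calP'',G_u)\}\le\lambda^*$, hence $\phi(G_u,G_u)\le\lambda^*$, so $q^*_1=q^*_2=G_u$ and we return $u$.

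I expect the boundary case $\lambda^*\in\{a_1,a_2\}$ to be the main obstacle: there a center is \emph{not} forced into the corresponding hanging subtree although its $\Lambda$-value is not below $\lambda^*$, so a plain feasibility query --- which cannot separate ``$>\lambda^*$'' from ``$=\lambda^*$'' --- is insufficient, and this is what necessitates the symbolic-perturbation argument and a careful check that in the equality case one may canonically take the optimal solution with a center at the vertex $G_u$. A secondary but essential point is making the ``at most two indices'' claim fully rigorous (using compactness so that $\lambda^*$ is attained, together with the monotonicity statements in Lemma~\ref{lem:articulatemedian}), since exactly this bound is what lets us use $O(1)$ decision calls instead of a binary search and thereby keep the running time at $O(m^2n^2)$.
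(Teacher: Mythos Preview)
Your overall strategy --- reduce the question to comparing the two largest values $a_1\ge a_2$ among $\{\Lambda(\calP_i,G_u)\}_i$ with the unknown $\lambda^*$ using $O(1)$ calls to the decision algorithm --- is exactly the paper's, and the symbolic-perturbation device is a clean alternative to the paper's explicit tie-handling case analysis. However, your middle case contains a genuine error. From $a_1>\lambda^*\ge a_2$ you correctly deduce that $G_u$ covers $\calP\setminus\calP_{i_1}$, but you then jump to ``there is an optimal solution whose second center sits at $G_u$'', and this does not follow. A pair $(q,G_u)$ has value at most $\lambda^*$ only if some single point $q$ covers all of $\calP_{i_1}$ within $\lambda^*$, i.e.\ only if the one-center optimum of $\calP_{i_1}$ is at most $\lambda^*$. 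Nothing in your hypotheses guarantees that: let $H_{i_1}$ contain two far-apart uncertain points that together need two centers to reach value $\lambda^*$, while every other split subgraph carries only points with tiny weighted expected distance to $G_u$. Then $a_2\le\lambda^*<a_1$ holds, yet every optimal pair has \emph{both} centers strictly inside $H_{i_1}$, so returning ``the node adjacent to $H_{i_1}$ together with $u$'' mislocates $c^*_2$.

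The paper draws the opposite (and correct) conclusion in this situation: when the feasibility test at the second-largest value fails, it declares that both $c^*_1$ and $c^*_2$ lie in the hanging subtree for $G_1\cup\{G_u\}$ and returns only the single adjacent node in that direction. The repair to your argument is therefore simple: in the case $a_1>\lambda^*\ge a_2$, conclude that \emph{both} critical edges/cycles are in $H_{i_1}$ and return just the one adjacent node (not $u$). With this correction the remaining parts of your plan --- the at-most-two-indices argument, the $O(1)$ decision calls, and the $O(m^2n^2)$ running time --- go through.
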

\begin{proof}
    Those notations defined in Lemma~\ref{lem:centerdetectingarticulate} are adapted for $G_u$ and its $s$ split subgraphs. Compute value $y_i = w_i\Ed(P_i,G_u)$ for each $P_i\in\calP$. 
    For its each split subgraph $G_i$ with $1\leq i\leq s$, find subsets $\calP^>(G_i)$ and $\calP^=(G_i)$ as well as value $\Lambda(\calP^>(G_i) ,G_u)$. These can be done in $O(mn)$ time. Without loss of generality, assume $y_1\geq y_2\cdots \geq y_n$ and $\Lambda(\calP^>(G_1), G_u)\geq \Lambda(\calP^>(G_2), G_u)\geq \cdots\geq\Lambda(\calP^>(G_s), G_u)$. We then locate $c^*_1$ and $c^*_2$ for each following case. 

    \begin{enumerate} 
    \item The median of some uncertain point $P_j$ with $y_j = y_1$ or $y_j = y_2$ is at $G_u$. By Lemma~\ref{lem:articulatemedian}, $G_u$ is the median of an uncertain point if its probability sum in any split subgraph of $G_u$ is not greater than $0.5$. Due to $\lambda^*\geq\max_{i=1}^{n}\{w_i\Ed(P_i,x^*_i)\}$, if $y_j = y_1$ then $\lambda^* = y_1$ and otherwise, $y_1> y_2 = y_j$ and $\lambda^* = \max\{w_1\Ed(P_1,x^*_1), y_2\}$.  
    
    The subset of all uncertain points whose medians are at $G_u$ is subset $\calP-\cup_{i=1}^{s}\calP^>(G_i)$ and hence, it can be found in $O(n)$ time. Additionally, if setting the weight of each $P_i$ with $i>1$ as zero, then $x^*_1$ is exactly the center w.r.t. $\calP$ on $G$ and the optimal objective value is $w_1\Ed(P_1,x^*_1)$. Consequently, we can decide the existence of such an uncertain point, and if it exists, compute $\lambda^*$ totally in $O(mn\log mn)$ time. 

    \item A split subgraph $G_j$ of $G_u$ with $2<j\leq t$ has $\Lambda(\calP^>(G_1),G_u)\geq\Lambda(\calP^>(G_j),G_u)\geq \Lambda(\calP^>(G_2),G_u)$. Lemmas~\ref{lem:articulatemedian} and~\ref{lem:cyclemedian} imply the following. Every hanging subgraph $G_i\cup\{G_u\}$ with $\Lambda(\calP^>(G_i),G_u)<\Lambda(\calP^>(G_2), G_u)$ except for $G_u$ contains neither $q^*_1$ nor $q^*_2$ since otherwise, the center can be moved to $G_u$ without increasing the objective value. Furthermore, if $\Lambda(\calP^>(G_1),G_u) = \Lambda(\calP^>(G_2),G_u)$ then $\lambda^* = \Lambda(\calP^>(G_1),G_u)$. Otherwise, $\Lambda(\calP^>(G_1),G_u) > \Lambda(\calP^>(G_2),G_u) =\Lambda(\calP^>(G_j),G_u)$. It follows $\lambda^*\geq\Lambda(\calP^>(G_2),G_u)$. 
    
    We decide the feasibility of $\Lambda(\calP^>(G_2),G_u)$ in $O(m^2n^2)$ time. If it is feasible, then $\lambda^* = \Lambda(\calP^>(G_2),G_u)$. If not, in order to obtain a smaller objective value, all necessary centers must be placed in the hanging subgraph $G_1\cup\{G_u\}$. Hence, $c^*_1$ and $c^*_2$ are all in $u$'s hanging subtree on $T$ representing 
    $G_1\cup\{G_u\}$. We thus return $u$'s adjacent node in that split subtree. 
    
    \item Every split subgraph $G_j$ of $G_u$ with $2<j\leq t$ has $\Lambda(\calP^>(G_j),G_u)<\Lambda(\calP^>(G_2),G_u)$. Clearly, $c^*_1$ is in $u$'s hanging subtree representing $G_1\cup\{G_u\}$. However, center $q^*_2$ might be in $G_1$ or $G_2$. Apply the decision algorithm to decide the feasibility of $\Lambda(\calP^>(G_2),G_u)$. If $\Lambda(\calP^>(G_2),G_u)\geq\lambda^*$, then $q^*_2$ might be in $G_2$, and otherwise, $q^*_2$ can be placed in $G_1\cup\{G_u\}$ in order to cover more uncertain points under $\lambda^*$. As a result, the two adjacent nodes of $u$ in its hanging subtrees for $G_1\cup\{G_u\}$ and $G_2\cup\{G_u\}$ are returned. 
    \end{enumerate}

    Handling each above case takes $O(m^2n^2)$ time in the worst case, which dominates the running time. Hence, the lemma holds.\qed
\end{proof}

\begin{lemma}\label{lem:CriticalCycleNode}
    Given a cycle node $u$ on $T$, we can decide in $O(m^2n^2)$ time whether $u$ is a critical node and if not, which $H$-subtrees of $u$ on $T$ contain $c^*_1$ and $c^*_2$. 
\end{lemma}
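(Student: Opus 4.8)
The plan is to run the argument of Lemma~\ref{lem:CriticalNonCycleNode} with the cycle $G_u$ in the role played there by the single vertex $G_u$, using the hinges $h_1,\dots,h_s$ of $G_u$ as the entry points of its split subgraphs $G_1,\dots,G_s$. First I would reuse the notation of Lemma~\ref{lem:centerdetectingcycle}: in $O(mn)$ time compute, for every split subgraph $G_i$, the subsets $\calP^{>}(G_i)$ and $\calP^{=}(G_i)$ and the value $\Lambda(\calP^{>}(G_i),h_i)$ (the $O(1)$ distance-to-$h_i$ queries being available after the stated $O(mn)$ preprocessing), and relabel the $G_i$ so that $\Lambda(\calP^{>}(G_1),h_1)\ge\dots\ge\Lambda(\calP^{>}(G_s),h_s)$. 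Put $\calP_0=\calP\setminus\bigcup_{i}\calP^{>}(G_i)$; by Lemma~\ref{lem:cyclemedian} this is exactly the set of uncertain points whose median lies on the cycle $G_u$ (including at a hinge). The ingredient absent from Lemma~\ref{lem:CriticalNonCycleNode} is that ``placing a center on $G_u$'' is a whole cycle rather than a point, so I would also compute, as in the proof of Lemma~\ref{lem:dec-one-cycle}, the $n$ piecewise-linear functions $y=w_i\Ed(P_i,x)$ for $x\in G_u$ in $O(mn\log mn)$ time, and from them $\mu_i:=\min_{x\in G_u}w_i\Ed(P_i,x)$ and $\mu^{*}:=\max_{P_i\in\calP_0}\mu_i$. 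Since for $P_i\in\calP_0$ the median lies on $G_u$, we have $\mu_i=w_i\Ed(P_i,x^{*}_i)\le\lambda^{*}$, so $\mu^{*}$ is a lower bound for $\lambda^{*}$.

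The structural facts I would invoke are consequences of Lemmas~\ref{lem:articulatemedian} and~\ref{lem:cyclemedian}. For $P_i\in\calP^{>}(G_j)$, the function $w_i\Ed(P_i,\cdot)$ restricted to $G_u$ is minimized at $h_j$, and any point outside $H_j\setminus\{h_j\}$ gives $P_i$ value at least $w_i\Ed(P_i,h_j)$; hence if neither $q^{*}_1$ nor $q^{*}_2$ lies strictly inside $H_j$, then $\lambda^{*}\ge\Lambda(\calP^{>}(G_j),h_j)$. Because at most two of the hanging subgraphs can contain a center, at least $s-2$ of the split subgraphs $G_i$ satisfy $\lambda^{*}\ge\Lambda(\calP^{>}(G_i),h_i)$, so in particular $\lambda^{*}\ge\max\{\Lambda(\calP^{>}(G_3),h_3),\mu^{*}\}$ (with $\Lambda(\calP^{>}(G_3),h_3)=0$ if $s<3$). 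Conversely, whether a prescribed value is attainable by two centers, or by one center placed on $G_u$ together with one elsewhere, is tested by the decision algorithm, each test costing $O(m^{2}n^{2})$ time by Lemma~\ref{lem:decision}.

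With these in hand I would carry out a case analysis parallel to that of Lemma~\ref{lem:CriticalNonCycleNode}, driven by comparing $\mu^{*}$ with $\Lambda(\calP^{>}(G_1),h_1)$, $\Lambda(\calP^{>}(G_2),h_2)$ and $\Lambda(\calP^{>}(G_3),h_3)$ and by at most a constant number of decision-algorithm calls. When the two largest of these quantities coincide, or when only one $\Lambda(\calP^{>}(G_i),h_i)$ exceeds $\mu^{*}$, the value $\lambda^{*}$ is pinned to a known candidate (possibly after one auxiliary one-center computation on $G$ with the weights of the already-coverable uncertain points zeroed, as in Case~1 of Lemma~\ref{lem:CriticalNonCycleNode}), which a single decision-algorithm call confirms; a center realizing $\lambda^{*}$ can then be placed on $G_u$, so $u$ is a critical node. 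Otherwise $q^{*}_1$ must lie strictly inside $H_1$, and testing the feasibility of $\Lambda(\calP^{>}(G_2),h_2)$ decides whether $q^{*}_2$ must also lie strictly inside $H_2$---in which case $u$ is not critical and we return the two $H$-subtrees representing $G_1$ and $G_2$---or whether $q^{*}_2$ may instead be taken on $G_u$ or inside $H_1$, in which case we return the $H$-subtree of $G_1$ and flag that $u$ may carry the second center, exactly as in the non-cycle case. When only a constant number of split subgraphs have their $\Lambda(\calP^{>}(G_i),h_i)$ above the rest, one may alternatively invoke Lemma~\ref{lem:CriticalNonCycleNode} at the relevant hinge $h_i$, just as the proof of Lemma~\ref{lem:centerdetectingcycle} reduces to Lemma~\ref{lem:centerdetectingarticulate}. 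Everything outside the $O(1)$ decision calls runs in $O(mn\log mn)$ time, so the total time is $O(m^{2}n^{2})$.

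The main obstacle is the extra structure a cycle carries over a vertex: because a center on $G_u$ ranges over the whole cycle, the per-uncertain-point lower bounds are no longer a single number $w_i\Ed(P_i,G_u)$, so one must carry the piecewise-linear profiles on $G_u$ (hence $\mu^{*}$) and be careful that ``$\lambda^{*}$ is realized with a center on $G_u$'' is genuinely possible before declaring $u$ critical. The delicate step is thus distinguishing ``$\lambda^{*}$ realized by a center somewhere on $G_u$'' from ``$\lambda^{*}$ realized only strictly inside a hanging subgraph'', while keeping the number of decision-algorithm invocations constant so as not to exceed the $O(m^{2}n^{2})$ budget. \qed
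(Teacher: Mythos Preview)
Your proposal is broadly on the right track and would likely go through, but it is more elaborate than what the paper actually does, and the extra machinery you introduce is precisely what creates the ``delicate step'' you flag at the end.

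The paper's proof does \emph{not} compute the cycle-wide minima $\mu_i$ or $\mu^{*}$, and does not run a case analysis comparing $\mu^{*}$ with the hinge values. Instead, it computes only the values $\Lambda(\calP^{>}(G_i),h_i)$, sorts them, and then uses what you list as an ``alternative'' as its \emph{primary} device: it applies Lemma~\ref{lem:CriticalNonCycleNode} at the hinge $h_1$ (and, when needed, at $h_2$) to decide whether a critical edge/cycle sits strictly inside $H_1$, strictly inside $H_2$, or is $G_u$ itself. The only case split is whether $\Lambda(\calP^{>}(G_2),h_2)=\Lambda(\calP^{>}(G_3),h_3)$ or $\Lambda(\calP^{>}(G_2),h_2)>\Lambda(\calP^{>}(G_3),h_3)$; in the former, any $H_j$ with $\Lambda(\calP^{>}(G_j),h_j)=\Lambda(\calP^{>}(G_2),h_2)$ cannot contain a center strictly in its interior (else moving it to $h_j$ does not raise the objective above $\lambda^{*}$), so everything is on $G_u$ unless $H_1$ is singled out by Lemma~\ref{lem:CriticalNonCycleNode} at $h_1$. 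In the latter, one applies Lemma~\ref{lem:CriticalNonCycleNode} at $h_1$ first and then, if both $u$ and a node inside $H_1$ are returned, at $h_2$ to disambiguate whether the second critical subgraph is $G_u$ or lies in $H_2$. This is $O(1)$ calls to Lemma~\ref{lem:CriticalNonCycleNode} (each $O(m^{2}n^{2})$), and nothing else.

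What this buys over your approach is exactly the avoidance of your ``main obstacle'': by pushing the question to the articulation point $h_i$, where expected distances again reduce to a single number $w_j\Ed(P_j,h_i)$, the proof never has to reason about where on the cycle a center might sit, and so never needs the piecewise-linear profiles, $\mu^{*}$, or the careful distinction you describe in your last paragraph. Your direct $\mu^{*}$-based case analysis is plausible but not fully worked out in the proposal (for instance, ``otherwise $q^{*}_1$ must lie strictly inside $H_1$'' is asserted rather than derived from the $\mu^{*}$ comparisons alone), and the cleanest way to fill that gap is precisely the hinge reduction you already mention as an alternative.
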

\begin{proof}
    For each split subgraph $G_i$ of $G_u$ with $1\leq i\leq s$, we compute in $O(mn)$ time value $\Lambda(\calP^>(G_i),h_i)$. Suppose $\Lambda(\calP^>(G_1),h_1)\geq\Lambda(\calP^>(G_2),h_2)\geq\cdots\geq \Lambda(\calP^>(G_s),h_s)$. By Lemmas~\ref{lem:articulatemedian} and~\ref{lem:cyclemedian}, $q^*_1$ and $q^*_2$ are not likely to be in any subgraph $H_j/\{h_j\}$ where $H_j$ is $G_j\cup\{h_j\}$ and that has $\Lambda(\calP^>(G_j),h_j)<\Lambda(\calP^>(G_2),h_2)$. 
    
    On the one hand, $\Lambda(\calP^>(G_1),h_1)\geq\Lambda(\calP^>(G_2),h_2) = \Lambda(\calP^>(G_3),h_3)$. Clearly, for any $1\leq j\leq s$, subgraph $H_j/\{h_j\}$ with $\Lambda(\calP^>(G_j),h_j) = \Lambda(\calP^>(G_2),h_2)$ contains neither $q^*_1$ nor $q^*_2$. Thus, $\lambda^*\geq\Lambda(\calP^>(G_2),h_2)$. 

    If $\Lambda(\calP^>(G_1),h_1)=\Lambda(\calP^>(G_2),h_2)$ then all necessary centers must be on $G_u$. Hence, $c^*_1 = c^*_2 =u$. Otherwise, we apply Lemma~\ref{lem:CriticalNonCycleNode} to $h_1$ to determine which of subgraphs $H_1$ and $G_u$ contains a critical edge or cycle. Generally, only node $u$ or both $u$ and a node $u_1$ adjacent to $h_1$'s hinge node in $T^H_u$'s split subtree for $H_1$ are obtained. 
    For the former case, all necessary centers are on $G_u$ and thereby $c^*_1 = c^*_2 =u$. In the later case, $u$ is a critical node and the hanging subtree of $T^H_u$ representing $H_1$ contains the other. 
    
    On the other hand, $\Lambda(\calP^>(G_1),h_1)\geq\Lambda(\calP^>(G_2),h_2) > \Lambda(\calP^>(G_3),h_3)$. If $h_1\neq h_2$, we first apply Lemma~\ref{lem:CriticalNonCycleNode} to $h_1$ to decide whether or not $H_1$ contains a critical edge or cycle. Generally, some node(s) are obtained. 
    If only $u$ is returned, due to $\Lambda(\calP^>(G_1),h_1)\geq\Lambda(\calP^>(G_2),h_2)$, all centers must be on $G_u$, which means $c^*_1 = c^*_2 =u$. When only node $u_1$ is obtained, $H_1$ contains both $c^*_1$ and $c^*_2$. For the case where two nodes are obtained, one must be $u$ and the other is $u_1$. Definitely, $H_1$ contains a critical cycle or edge. But we need to further decide whether $G_u$ is the other critical cycle or $H_2$ contains the other, which can be addressed by applying Lemma~\ref{lem:CriticalNonCycleNode} to $h_2$. As long as we receive the node that adjacent to $h_2$'s hinge node in $T^H_u$'s split subtree for $G_2$, $H_2$ contains the other critical cycle or edge but $G_u$ does not. Otherwise, $G_u$ is a critical cycle.  

    For the case where $h_1$ and $h_2$ are same hinge, we determine which of subgraphs $H_1$, $H_2$, and $G_u$ contain a critical edge or cycle by applying Lemma~\ref{lem:CriticalNonCycleNode} only to $h_1$. 
    The remaining is similar to the above case. We omit the details.  

    The running time for handling each above case is dominated by the time of the $O(1)$ calls on the decision algorithm. So, the time complexity is $O(m^2n^2)$.\qed
\end{proof}

In general, $c^*_1$ is obtained after performing $O(\log mn)$ recursive steps on $T$ where each recursive step takes $O(m^2n^2)$ time. After obtaining $c^*_1$, $c^*_2$ are computed in a similar way in $O(m^2n^2\log mn)$ time. With $c^*_1$ and $c^*_2$, the following lemma is adapted to compute $\lambda^*$ in $O(m^2n^2\log mn)$ time. 

\begin{lemma}\label{lem:basecase}
Given $c^*_1$ and $c^*_2$, we can compute $\lambda^*$ in $O(m^2n^2\log mn)$ time.
\end{lemma}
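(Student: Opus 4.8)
The plan is to convert the decision procedure of Lemma~\ref{lem:decision} into an optimization algorithm now that the critical edges/cycles $c^*_1$ and $c^*_2$ are fixed, by binary searching for $\lambda^*$ over a sorted set of $O(m^2n^2)$ candidate values while using the $O(m^2n^2)$-time decision algorithm as the feasibility oracle.

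First I would split into cases according to the types of $c^*_1$ and $c^*_2$: (i) $c^*_1=c^*_2$ is a single out-of-cycle edge (both centers on one edge); (ii) $c^*_1=c^*_2$ is a single cycle; (iii) $c^*_1$ and $c^*_2$ are two distinct out-of-cycle edges; (iv) one is an edge and the other a cycle; (v) two distinct cycles. In each case the configuration is exactly the one handled by one of Lemmas~\ref{lem:dec-one-edge}, \ref{lem:dec-one-cycle}, \ref{lem:dec-two-cycle} (or a straightforward combination of the one-edge and one-cycle procedures), restricted so that the two centers are forced onto $c^*_1$ and $c^*_2$; hence the feasibility of any given $\lambda$ can still be tested in $O(m^2n^2)$ time.

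Next I would argue that $\lambda^*$ lies in a small explicit candidate set. At an optimal placement the bottleneck value satisfies $\lambda^* = w_i\,\Ed(P_i,q^*_k)$ for some $P_i$ and $k\in\{1,2\}$, and $q^*_k$ cannot be moved along $c^*_k$ without raising $\phi$; a standard exchange argument then forces $q^*_k$ to be one of: (a) a vertex of $G$; (b) a slope-change breakpoint of the piecewise-linear function $w_i\,\Ed(P_i,\cdot)$ along $c^*_k$; or (c) a point where two of the functions $w_j\,\Ed(P_j,\cdot)$ and $w_{j'}\,\Ed(P_{j'},\cdot)$ coincide along $c^*_k$ (when $c^*_k$ is a cycle, these crossings also capture the places where the clockwise/counterclockwise arc endpoints imposed by coverage demands meet, as in the proofs of Lemmas~\ref{lem:dec-one-cycle} and~\ref{lem:dec-two-cycle}). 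Using the $O(mn\log mn)$-time construction of the functions $w_i\,\Ed(P_i,\cdot)$ on a cycle or edge from \cite{ref:HuCo23}, each of complexity $O(m)$, all such values on $c^*_1$ and $c^*_2$ — together with the candidate values generated by the one-center subroutine invoked inside the decision lemmas — number $O(m^2n^2)$ and can be assembled and sorted in $O(m^2n^2\log mn)$ time. Then I would binary search this sorted list for the smallest value at which the decision algorithm reports feasibility; by the characterization that value is exactly $\lambda^*$. The search issues $O(\log mn)$ oracle calls, each costing $O(m^2n^2)$, so the total time is $O(m^2n^2\log mn)$, as claimed.

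I expect the main obstacle to be step two in the cross-component cases (iv) and (v): when $q^*_1$ and $q^*_2$ lie on different cycles/edges, the optimal position of one center is pinned not only by the functions on its own component but also by a coverage demand relayed through the fixed-length path to the other component, so one must verify that the associated critical $\lambda$-values remain within the $O(m^2n^2)$ bound — essentially pairings of an $O(m)$-piece function on $c^*_1$ with one on $c^*_2$ subject to a fixed path-length offset — and that the candidate set of the embedded one-center problem stays within the same budget. The remaining pieces, namely the exchange argument in step two and the monotonicity in $\lambda$ underlying the binary search, are routine.
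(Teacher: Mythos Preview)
Your proposal is essentially the same approach as the paper's: characterize $\lambda^*$ as one of $O(m^2n^2)$ candidate values coming from the piecewise-linear functions $w_i\Ed(P_i,\cdot)$ on $c^*_1$ and $c^*_2$, then locate it with the $O(m^2n^2)$ decision oracle.

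Two small differences are worth noting. First, the paper avoids your case split and your cross-component worry entirely by observing that whichever center attains $\lambda^*$ is the one-center of the subset it covers; hence $\lambda^*$ is simply the $y$-coordinate of an intersection of two of the $O(mn)$ lines supporting the pieces of the functions on $c^*_1$ or on $c^*_2$ (plus the vertical lines at edge endpoints when $c^*_k$ is an edge), with no ``relayed'' candidates linking the two components. Second, instead of explicitly enumerating and sorting all $O(m^2n^2)$ intersections, the paper invokes the line-arrangement search technique of~\cite{ref:ChenAn13} on the set $L$ of $O(mn)$ lines with the decision algorithm as feasibility test, which gives the same $O(m^2n^2\log mn)$ bound more directly.
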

\begin{proof}
We observe that $\lambda^*$ belongs to the set consisting of values $w_i\Ed(P_i,x^*_i)$ of each $P_i\in\calP$ and the $y$-coordinates of all intersections between functions $y = w_i\Ed(P_i,x)$ w.r.t. every edge on $G$ where the slopes of two functions are of opposite signs. This is because the one of $q^*_1$ and $q^*_2$ leading $\lambda^*$ 
must be the center of all uncertain points covered by it under $\lambda^*$ and the one-center objective value of $\calP$ on $G$ belongs to such a set~\cite{ref:HuCo23}. 

To find $\lambda^*$, it suffices to find such sets of candidate values w.r.t. every edge in subgraphs of $G$ represented by $c^*_1$ and $c^*_2$. Since $c^*_1$ and $c^*_2$ are out-of-cycle edges or cycles on $G$, it takes $O(mn)$ time to determine all functions $y=w_i\Ed(P_i,x)$ w.r.t. $x\in c^*_1$ and $x\in c^*_2$, respectively. 
These generate $O(n)$ piece-wise linear functions of complexity $O(m)$ in the $x,y$-coordinate plane where all edges of $c^*_1$ and $c^*_2$ are on $x$-axis. 

If $c^*_1$ is a cycle, then that candidate set w.r.t. all edges on $c^*_1$ belongs to the set of the $y$-coordinates of the intersections between the $O(mn)$ lines that contain the line segments on the graphs of all $y=w_i\Ed(P_i,x)$. Note that this set also includes all values $w_i\Ed(P_i,x^*_i)$ w.r.t. $c^*_1$. 

Otherwise, $c^*_1$ is an out-of-cycle edge on $G$. Each $y=w_i\Ed(P_i,x)$ is a line segment in the $x,y$-coordinate plane. The candidate set w.r.t. $c^*_1$ belongs to the set that includes not only the $y$-coordinates of all intersections between the $O(n)$ lines containing line segments of all $y=w_i\Ed(P_i,x)$ but also the $y$-coordinates of their intersections with the two vertical lines through $c^*_1$'s incident vertices on $x$-axis. The latter involves all values $w_i\Ed(P_i,x^*_i)$ for uncertain points whose medians are at $c^*_1$'s incident vertices. 

Denote by $L$ the set of these $O(mn)$ lines caused by all $y=w_i\Ed(P_i,x)$ respectively w.r.t. $c^*_1$ and $c^*_2$ 
including the two vertical lines through their incident vertices 
if they are out-of-cycle edges. Based on the above analysis, computing $\lambda^*$ is equivalent to finding the smallest feasible $y$-coordinate among all intersections of lines in $L$, 
that is, finding the lowest vertex with the smallest feasible $y$-coordinate in the line arrangement of $L$. Thus, the line arrangement search technique~\cite{ref:ChenAn13} can be adapted to find $\lambda^*$ 
by using our decision algorithm for the feasibility test, which runs in $O(m^2n^2\log mn)$ time. Therefore, the time complexity for computing $\lambda^*$ is $O(m^2n^2\log mn)$.\qed
\end{proof}

Recall that at the beginning we reduced the problem into a vertex-constrained instance by Lemma~\ref{lem:twocenterreduction} in $O(|G|+ mn)$ time. Combining all above efforts, the following result is derived.

\begin{theorem}\label{theorem:1}
The two-center problem of $n$ uncertain points on a cactus graph $G$ 
can be solved in $O(|G|+ m^2n^2\log mn)$ time.
\end{theorem}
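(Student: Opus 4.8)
The plan is to assemble Theorem~\ref{theorem:1} by stitching together the pieces already established in the excerpt, so the ``proof'' is essentially a bookkeeping argument over the pipeline. First I would invoke Lemma~\ref{lem:twocenterreduction} to reduce the given general instance to a vertex-constrained instance in $O(|G|+mn)$ time; after this point every location sits at a vertex, every vertex holds a location, and the reduced graph has size $O(mn)$, so any subsequent quantity measured in ``size of $G$'' is really $O(mn)$. All later steps operate on this reduced instance, and the only place the original $|G|$ enters the final bound is through this single reduction step.

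Next I would recall that the decision problem is solved in $O(m^2n^2)$ time by Lemma~\ref{lem:decision}, and that the optimal value $\lambda^*$ lies in a structured candidate set: by Lemma~\ref{lem:basecase}'s observation it is either some $w_i\,\Ed(P_i,x_i^*)$ or the $y$-coordinate of an intersection of two expected-distance profiles with oppositely-signed slopes along an edge of a critical subgraph. The optimization therefore decomposes into (i) locating the two critical cycles/edges $c_1^*,c_2^*$ on the tree representation $T$, and (ii) searching the $O(mn)$ candidate lines associated with $c_1^*$ and $c_2^*$ for the lowest feasible arrangement vertex. For step (i) I would describe the two ``binary searches'' on $T$: repeatedly take the centroid $c$ of the current subtree, apply Lemma~\ref{lem:CriticalNonCycleNode} (if $c$ is a hinge/$G$-node) or Lemma~\ref{lem:CriticalCycleNode} (if $c$ is a cycle node) to decide which hanging subtree(s) contain $c_1^*$ (and possibly $c_2^*$), and recurse into the relevant one. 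Since each centroid step halves the subtree size, after $O(\log mn)$ steps the subtree collapses to a single node or edge that is $c_1^*$; each step costs $O(m^2n^2)$ (dominated by the $O(1)$ decision-algorithm calls inside the two lemmas), so finding $c_1^*$ costs $O(m^2n^2\log mn)$, and $c_2^*$ is found analogously in the same time by restricting the search to the hanging subtree of the relevant centroid flagged as containing the second critical component.

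For step (ii) I would cite Lemma~\ref{lem:basecase}: given $c_1^*$ and $c_2^*$, form the set $L$ of $O(mn)$ lines from the piecewise-linear profiles $y=w_i\Ed(P_i,x)$ over the edges of $c_1^*,c_2^*$ (including the vertical lines through the incident vertices of any out-of-cycle critical edge), and use the line-arrangement search technique of~\cite{ref:ChenAn13} with the $O(m^2n^2)$-time decision algorithm as the feasibility oracle to extract the lowest feasible vertex; this is $\lambda^*$ and the procedure runs in $O(m^2n^2\log mn)$ time. Summing the three contributions $O(|G|+mn)$, $O(m^2n^2\log mn)$, and $O(m^2n^2\log mn)$ gives the claimed $O(|G|+m^2n^2\log mn)$ bound. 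The main obstacle — and the only genuinely non-routine part — is establishing correctness of step (i), namely that the centroid-based pruning never discards a hanging subtree that actually contains $q_1^*$ or $q_2^*$; this is precisely what Lemmas~\ref{lem:CriticalNonCycleNode} and~\ref{lem:CriticalCycleNode} guarantee (via the monotonicity consequences of Lemmas~\ref{lem:articulatemedian} and~\ref{lem:cyclemedian} and the case analysis on the ordered values $\Lambda(\calP^>(G_i),\cdot)$), so in the final proof this reduces to correctly invoking those lemmas and checking that the ``two critical components'' are tracked consistently across the two searches.
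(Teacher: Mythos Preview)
Your proposal is correct and follows essentially the same approach as the paper: reduce to the vertex-constrained instance via Lemma~\ref{lem:twocenterreduction}, locate $c^*_1$ and $c^*_2$ by $O(\log mn)$ centroid steps on $T$ using Lemmas~\ref{lem:CriticalNonCycleNode} and~\ref{lem:CriticalCycleNode} at $O(m^2n^2)$ each, and then invoke Lemma~\ref{lem:basecase} for the final $O(m^2n^2\log mn)$ arrangement search. The paper's own proof of Theorem~\ref{theorem:1} is just this one-line bookkeeping over the preceding lemmas, so your write-up already contains everything needed.
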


\section{Conclusion}
As mentioned in the previous work, no algorithm is known for the general $k$-center problem w.r.t. $\calP$ on $G$. It is thus natural to ask whether our algorithm can be extended to solve the $k$-center problem for $k$ being a constant. Unfortunately, even for $k=3$, Lemmas~\ref{lem:centerdetectingarticulate} and~\ref{lem:CriticalNonCycleNode} cannot be generalized since at any articulate point of $G$, when exactly two hanging subgraphs are known to contain all centers, it is hard to figure out which of them must contain two centers even for the decision problem. 

Moreover, for $G$ be a tree, the decision $k$-center problem w.r.t. $\calP$ can be solved by a greedy algorithm that places least centers on $G$ in the bottom-up manner to cover $\calP$ under any given $\lambda$. However, for $G$ being a cactus graph, this greedy scheme does not work since $\Ed(P_i,x)$ for $x$ being in any path is no longer a convex function. It would be interesting to find out whether the general $k$-center problem can be solved efficiently. 

% \bibliographystyle{splncs04}
% \bibliography{library.bib}

\end{document}